\newtheorem{theorem}{Theorem}
\newtheorem{myrule}{Rule}
\algnewcommand\KwAnd{\textbf{\upshape and} }
\algnewcommand{\LeftComment}[1]{\(\triangleright\) #1}
\newcommand{\stitle}[1]{\noindent \textbf{#1} }
\definecolor{Plum}{HTML}{C2938D}
\definecolor{light-gray}{gray}{0.8}
\newcommand{\deleted}[1]{}
\newcommand{\added}[1]{#1}
  \providecommand\BibTeX{{%
    \normalfont B\kern-0.5em{\scshape i\kern-0.25em b}\kern-0.8em\TeX}}}
\author{Ziliang Lai, Chris Liu, Eric Lo}
\affiliation{\institution{The Chinese University of Hong Kong}\country{}}
\email{{zllai,cyliu,ericlo}@cse.cuhk.edu.hk}
\begin{document}
\title{When Private Blockchain Meets Deterministic Database}

\begin{abstract}
Private blockchain as a replicated transactional system shares many commonalities with distributed database. 
However, the intimacy between private blockchain and deterministic database has never been studied. In essence, private blockchain and deterministic database both ensure replica consistency by determinism.
In this paper, we present a comprehensive analysis to uncover the connections between private blockchain and deterministic database.
While private blockchains have started to pursue deterministic transaction executions recently, deterministic databases have already studied deterministic concurrency control protocols for almost a decade.
This motivates us to propose Harmony, a novel deterministic concurrency control protocol designed for blockchain use.
We use Harmony to build a new relational blockchain, namely HarmonyBC, which features low abort rates, hotspot resiliency, and inter-block parallelism, all of which are especially important to disk-oriented blockchain.
Empirical results on Smallbank, YCSB, and \added{TPC-C} show that HarmonyBC offers $2.0\times$ to $3.5\times$ throughput better than the state-of-the-art private blockchains.
\end{abstract}
\fancyhead{}
\maketitle
\setlength{\textfloatsep}{0.0em}
\setlength{\intextsep}{0.5em}
\section{Introduction} \label{sec:intro}

\begin{figure}
    \centering
    \includegraphics[width=0.55\linewidth]{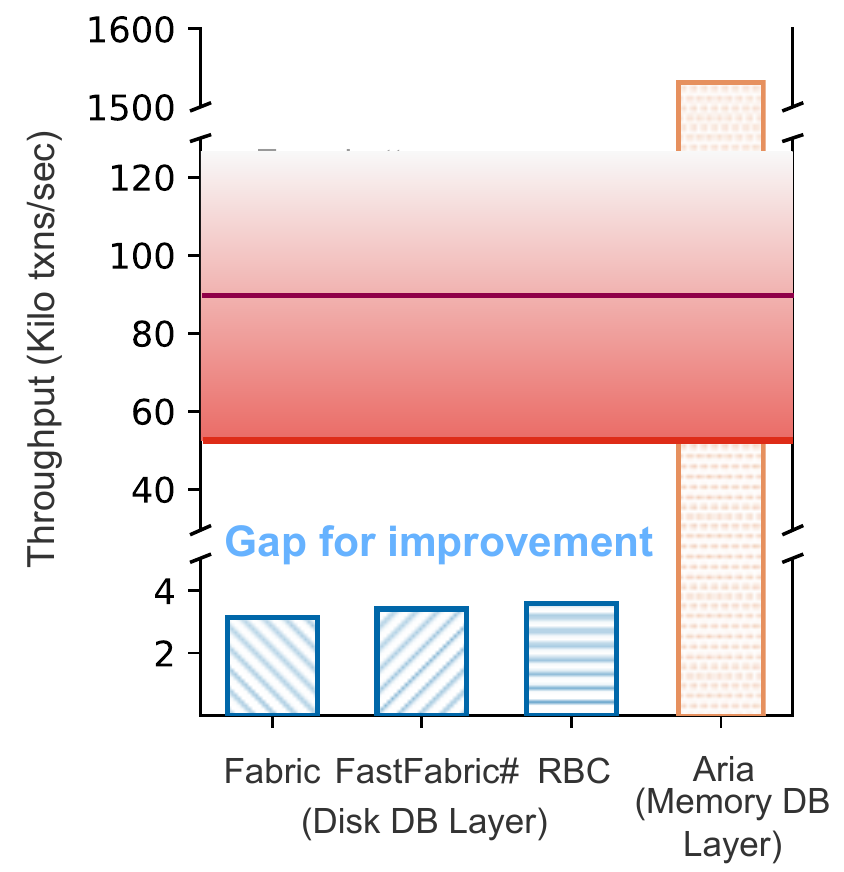}\vspace{-0.5em}
    \caption{\added{The database layer is the bottleneck of disk-based private blockchains. Smallbank workloads on Amazon c5.4xlarge nodes. For WAN, 80 nodes are geo-distributed in four different continents. Throughputs of the database layers are
    measured by using only one ordering node to write off consensus.}}
    \vspace{0.5em}
    \label{fig:bottleneck}
\end{figure}

A blockchain is a ledger \emph{replicated} on every 
peer in the system. 
Internally, it has a \emph{consensus layer} to ensure
all non-faulty replicas agree on the same blocks of the transactions
and process each block of transactions with 
a \emph{database layer}.
The consensus layer
is the bottleneck 
when the database layer is main-memory resident,
i.e., data in DRAM and group commit logs to disk
(e.g., ResilientDB \cite{resilientdb}).
However, the bottleneck stays in the database layer 
when the blockchain is disk-oriented \cite{fabric++}.
Today, almost all enterprise-grade blockchains are disk-oriented (e.g., IBM Fabric \cite{fabric}\deleted{, Fabric++ \cite{fabric++}, Fabric \#\cite{fabricsharp}}, Facebook Diem \cite{diem}, IBM RBC \cite{rbc}, ChainifyDB \cite{chainifydb})
because of a lower deployment cost and 
various use cases
(e.g., IoT device has limited memory \cite{iot}, wimpy systems \cite{wimpy}).
For disk-oriented blockchains, 
while prior work \cite{fabric++,fabricsharp,rbc} has shown the database layer is the bottleneck 
(via overall throughput improvement when optimizing the database layer), 
we confirm that is still the case 
under a more recent setting 
with modern hardware, more nodes, and recent software version
(Figure \ref{fig:bottleneck}).
Specifically, we show that recent (Byzantine) consensus protocols (e.g., HotStuff \cite{hotstuff}, GeoBFT \cite{resilientdb}, RCC \cite{RCC}) have already outrun the \deleted{blockchain database layers}disk-oriented database layers by an order of magnitude.

In this paper, we focus on disk-oriented blockchains
(or simply \emph{blockchains} hereafter)
because of cost and adoption
(e.g., Fabric is available on all major cloud providers \cite{fabric-report} and has over 2000 companies deployed worldwide \cite{fabric-deploy}).
We aim to \deleted{improve its database layer to pursue its consensus layer}\added{narrow the throughput gap between its database layer and the consensus layer}.
The consensus layer often collects transactions from clients 
and broadcasts transaction blocks to all replicas \cite{fabric, rbc, chainifydb}.
Hence, as long as every replica receives and executes the same set of transactions in the database layer \emph{deterministically}, all replicas will reach the same state,
effectively achieving replica consistency.
While concurrent transaction execution is the key to boosting the throughput of the database layer,
classic concurrency control protocols
are insufficient 
to uphold replica consistency due to non-determinism
--- given the same transaction block, 
replicas using the same concurrency control protocol
may reach divergent states
since they may get different serializable schedules.
Consequently, we observe that
state-of-the-art private blockchains 
are actually using different forms of \emph{deterministic concurrency control} implicitly (perhaps subconsciously) to uphold replica consistency with concurrent execution.

The connection between private blockchain and 
\emph{deterministic database}
has never been explicit, although 
the connection between the former and \emph{distributed database}
has been studied elsewhere \cite{bc_vs_db}.
Private blockchain and deterministic database 
actually share a lot in common.
First, both of them are distributed in nature.
Second, 
they also process transactions block-by-block  \cite{calvin,bohm,pwv,aria}.
Third, both blockchains and deterministic databases mandate consistent states across replicas
with blockchains regarding replica consistency as a requirement while
deterministic databases achieve replica consistency as a consequence.

The first contribution of this paper is 
to explicitly connect private blockchain
with deterministic database.
By exploiting the connections, 
we are able to fast-forward the development of private blockchain using lessons learned from deterministic databases.
To demonstrate that,
we follow some recent work \cite{rbc, chainifydb} to  ``chainify'' 
(i.e., modifying a relational database to be a private blockchain \cite{chainifydb})
PostgreSQL 
using the state-of-the-art deterministic concurrency control (DCC) protocol Aria \cite{aria}.
Empirical results demonstrate that such a prototype already yields a 1.3$\times$--2.7$\times$ throughput improvement over the state-of-the-art private blockchains 
(Section \ref{sec:evaluation}). 
That is because Aria after all is the latest DCC protocol
and it admits more concurrency than 
the ones used by the state-of-the-art private blockchains.

Although the preliminary results are exciting, 
 we observe that existing deterministic concurrency control protocols are 
 not optimized for a disk-oriented database layer.
Hence, the second contribution of this paper 
is Harmony, a new deterministic concurrency control protocol
designed for blockchains.
Harmony features 
(i) low abort rates,
(ii) hotspot resiliency, and
(iii) inter-block parallelism, 
all of which are crucial to blockchains.
First, compared to the state-of-the-art deterministic databases,
Harmony is specially optimized to achieve low abort rates
because an abort in a blockchain is way more expensive than an abort in a main-memory database --- a transaction in blockchain involves not only expensive disk I/Os but also network round trips and cryptographic operations. 
Second, modern transactional workloads often suffer from hotspots \cite{bamboo}. 
A hotspot is a handful number of database records that are frequently updated. Hotspots could worsen the number of aborts 
of a concurrency control protocol.
Deterministic concurrency control protocols, however, seldom optimize for hotspots (again due to aborts in main-memory database being relatively cheap)
but Harmony's goal of minimizing aborts can conveniently solve the problem.
Third, unlike all deterministic databases,
Harmony supports inter-block parallelism.
Inter-block parallelism allows a block 
to start processing 
before its previous block ends.
It is advantageous in blockchain 
because the complex interplay 
between I/O, network access, and caching 
would often result in high(er) transaction latency variance
within a block.
Hence, unless with inter-block parallelism,
a blockchain would suffer from resource
under-utilization
that limits its throughput \cite{rbc} (e.g., a straggler transaction in a block 
would idle CPU cores as well as block the pipeline).
Overall, Harmony 
as a deterministic optimistic concurrency control protocol 
has overcome the challenge
of reducing aborts with 
minimal 
serializability check overhead
and
supporting inter-block parallelism.

\begin{table*}
\centering
\resizebox{\textwidth}{!}{%
\begin{tabular}{|l|c|c|c|c|c|}
\hline
\multirow{2}{*}{}                 & \multicolumn{3}{c|}{\textbf{Private blockchains}}                                                                                                             & \multicolumn{2}{c|}{\textbf{Deterministic databases}}                                                                                                                                       \\ \cline{2-6} 
                                  & Fabric\cite{fabric}, Fabric++\cite{fabric++}, Fabric\#\cite{fabricsharp}                                                                                         & RBC\cite{rbc}           & ChainifyDB\cite{chainifydb}              & Calvin\cite{calvin}, BOHM\cite{bohm}, PWV\cite{pwv}                                                                                 & Aria\cite{aria}                                                                                   \\ \hline\hline
Architecture                      & Simulate-Order-Validate                                                                                            & \multicolumn{2}{c|}{Order-Execute}       & \multicolumn{2}{c|}{Sequence-Execute}                                                                                                                                                       \\ \hline
\deleted{
Fault tolerance                   & \multicolumn{3}{c|}{Byzantine-fault or crash-fault tolerant}                                                                                                  & \multicolumn{2}{c|}{Crash-fault tolerant}                                                                                                                                                   \\ \hline}
\begin{tabular}[c]{@{}l@{}}
     Deterministic  \\
     concurrency control
\end{tabular} & \multicolumn{2}{c|}{Optimistic}                                                                                                    & Pessimistic              & Pessimistic                                                                                        & Optimistic                                                                             \\ \hline
Storage                           & \multicolumn{3}{c|}{Disk-based}                                                                                                                               & \multicolumn{2}{c|}{Memory-based}                                                                                                                                                           \\ \hline
Recovery                          & \multicolumn{2}{c|}{Physical logging}                                                                                              & Logical logging          & \multicolumn{2}{c|}{Logical logging}                                                                                                                                                        \\ \hline
\end{tabular}%
}
\caption{Private (disk-based) blockchains and Deterministic databases} 
\vspace{-0.5cm}
\label{fig:bc_vs_dcc}
\end{table*}

The last contribution of this paper is HarmonyBC,
a private blockchain that chainifies PostgreSQL
using Harmony.  
HarmonyBC inherits all the 
features from PostgreSQL such that it is one of the few blockchains that 
can support SQL and stored procedures as smart contracts.
Most 
existing deterministic databases (e.g., Calvin \cite{calvin}, BOHM \cite{bohm}, PWV \cite{pwv}) require a static analysis on 
the stored procedures to 
extract their read-write sets for deterministic scheduling.
However, stored procedures often contain branches that predicate on the query results, which
impede static analysis \cite{aria}.
HarmonyBC does not have such a limitation.
HarmonyBC is extensively evaluated using benchmarks commonly used in blockchains including YCSB \cite{ycsb} and SmallBank \cite{ssi}, and a relational benchmark -- TPC-C \cite{tpcc}.
Empirical results show that \deleted{in an 80-node setting, }HarmonyBC offers
2.0$\times$ to 3.5$\times$ throughput better than RBC \cite{rbc} and FastFabric\# \cite{fabricsharp},
and 2.3$\times$ throughput better than AriaBC (a blockchain
implemented using the same framework as HarmonyBC but using Aria) under high contention.

The remainder of this paper is organized as follows. Section \ref{sec:connect} analyzes the connections between private blockchain and deterministic database.
Section \ref{sec:harmony} describes the design of our deterministic concurrency control protocol Harmony. In Section \ref{sec:harmonybc}, we provide details of HarmonyBC.
Section \ref{sec:evaluation} presents the evaluation. Section \ref{sec:related} discusses related works and Section \ref{sec:conclusion} concludes this paper.

\section{Connecting Private Blockchain with Deterministic Database} \label{sec:connect}
\added{Private blockchain can be used as a shared trustworthy ledger among a group of decentralized and untrusted companies while distributed databases cannot.}
Despite that, they share many commonalities as studied in \cite{bc_vs_db}. 
In this paper, we point out an even closer relationship --
private blockchain \emph{is a} (secure) deterministic database in disguise.
In this section, 
we present comparisons and analyze private (disk-based) blockchain and deterministic database.
We focus on work that optimizes the database layer
in terms of architecture, concurrency, storage, and recovery.
Table \ref{fig:bc_vs_dcc} gives a high-level summary. 
Discussions of other related work
(e.g., consensus, sharding, use of SGX instructions) are in Section \ref{sec:related}.

\subsection{Architecture} \label{sec:architecture}
There are two different types of architecture in private blockchains:
Simulate-Order-Validate (SOV) \cite{fabric, fabric++, fabricsharp} and Order-Execute (OE) \cite{quorum, rbc, chainifydb}.
Deterministic databases only have one type:  Sequence-Execute (SE)  \cite{calvin, bohm, pwv, aria}.

\subsubsection{Simulate-Order-Validate (SOV)}
SOV is 
a blockchain architecture 
advocated by Fabric \cite{fabric}.
Fabric's progeny such as
Fabric++ \cite{fabric++} and Fabric\# \cite{fabricsharp} all follow the same architecture.
In SOV, a transaction has a workflow of
``(1) client $\rightarrow$ (2) endorsers $\rightarrow$ (3) client $\rightarrow$ (4) orderer $\rightarrow$ (5) replicas'' to ensure the transaction is consistently executed on all replicas.

First, (1) a client  submits a  transaction $T$ to a subset of
replicas (known as \emph{endorsers}).
(2) On arriving at an endorser, 
$T$ begins its simulation phase 
and ``executes'' against the endorser's local  \emph{latest state
without persisting its writes}.
The purpose of the simulation phase is to collect the read-write set of $T$, 
where the read-set contains keys and version numbers for all records read and the write-set contains the updated keys and their new values.
Since individual endorsers may catch up with the latest states at different speeds,
the read-write sets for the same transaction $T$ may diverge across the replicas.
Hence, SOV requires each endorser to sign the read-write set it produced and \emph{sends back to the client}.
After collecting the potentially diverged read-write sets of a transaction from the endorsers,
(3) the client follows a predefined policy to pick one read-write set 
and sends it to an ordering service.
(4) The ordering service is often an independent service that   
serves the consensus layer 
to collect transactions from all clients and form 
an agreed block of transactions.
After that, the ordering service broadcasts transactions to all replicas in a 
block-wise manner.
(5) On receiving a transaction block,
a replica validates the security (e.g., verifying the signatures)
and the serializability of the transactions in that block.
For the latter, a replica checks if the received read-set of a transaction is 
still consistent with the replica's local latest version (because it has been a while since the time of getting the read-set in Step (1)), aborts $T$ if it has stale reads
or commits $T$ to update the replica's local state otherwise.

\subsubsection{Order-Execute (OE)}
OE is a blockchain architecture used in
many recent private blockchains like RBC \cite{rbc} and ChainifyDB \cite{chainifydb}. 
In OE, clients submit transactions to an ordering service straight.
The job of the ordering service in OE is the same as in SOV --- 
collects client transactions, 
orders them to form transaction blocks,
and broadcasts blocks to the replicas. 
Unlike SOV, OE only ships \emph{transaction commands} around instead of shipping the transaction read-write sets, which saves network bandwidth.
On receiving a block of transactions, 
each replica executes the block independently.
To uphold consistency across replicas, 
one way is to enforce 
the individual replicas to honor the transaction order in the block by executing the transactions \emph{serially} \cite{quorum}.
ChainifyDB and RBC, however, can execute transactions concurrently (Section \ref{sec:concurrency}).

\subsubsection{Sequence-Execute (SE)}
SE is the architecture of all deterministic databases.
In SE, there is a sequencing layer (which could be as lightweight as a single machine) collecting transactions from client applications and assigning a unique transaction ID (TID) to each transaction. 
Then, the sequencing layer broadcasts the transactions
to the replicas in a block-wise manner.
When a block of transactions reaches a replica, 
every replica executes the transactions using 
a \emph{deterministic concurrency control} protocol, 
with the objective to reach the identical resulting states 
independently without any coordination.

\subsubsection{OE = SE}
At this point, we can see that 
the order-execute architecture from blockchains
is no different from the sequence-execute architecture from deterministic databases:
in terms of functionality, the ordering service in OE is equivalent to 
the sequencing layer of deterministic databases,
whereas the serial/concurrent transaction executions in private blockchains
are actually different types of \emph{deterministic concurrency control} in disguise.

\deleted{
\subsection{Fault tolerance} \label{sec:fault_tolerance}
{\color{red} Early private blockchains use Byzantine-fault tolerant consensus (e.g., PBFT \cite{pbft})
in their ordering service \cite{tendermint,quorum}.
Recent private blockchains assume the ordering service is party-neutral and hence 
simply use a crash-fault tolerance service like Kafka \cite{kafka-performance}.  
Such assumption is acceptable because 
peer identities in private blockchains are known,
and any misbehavior could be identified.}

Deterministic databases are crash-fault tolerant \emph{at best},
with its sequencing layer running 
a crash-fault consensus protocol (e.g. Paxos \cite{paxos})
using multiple nodes.
Since deterministic database and OE-based private blockchain share the same architecture,
making the former Byzantine-fault tolerant is straightforward --- 
replacing its sequencing layer with a Byzantine-fault tolerant ordering service.
{\color{red} Nonetheless, such a change is unnecessary as recent private blockchains 
do not use Byzantine-fault tolerant service either.}
}

\subsection{Deterministic Concurrency Control (DCC)} \label{sec:concurrency}

\begin{table}[]
\centering
\resizebox{\columnwidth}{!}{%
\begin{tabular}{|l|l|}
\hline
\multicolumn{1}{|c|}{\textbf{Pessimistic DCC}}                                                                                                                           & \multicolumn{1}{c|}{\textbf{Optimistic DCC}}                                                                                                                                                                      \\ \hline\hline
\begin{tabular}[c]{@{}l@{}}(a) Deterministic pre-scheduling\\ {\small \quad - Lock manager (Calvin) }\\ {\small \quad - Multi-versioning (BOHM) } \\ {\small \quad - Dependency graph } \\ {\small \quad \; (PWV, ChainifyDB)}\end{tabular} & \begin{tabular}[c]{@{}l@{}}(c) Deterministic read-write sets\\ {\small \quad - Extra-trip back to client} \\ {\small \quad \; (Fabric, Fabric++, Fabric\#)} \\ {\small \quad - Snapshot-based (RBC, Aria)}\end{tabular}                                       \\ \hline
\begin{tabular}[c]{@{}l@{}}(b) Deterministic schedule execution\\       (Calvin, BOHM, PWV, ChainifyDB)\end{tabular}                                                     & \begin{tabular}[c]{@{}l@{}}(d) Deterministic commit \\ \quad - Serial\\ {\small \quad\quad - Dangerous structure (Fabric, RBC)}\\  {\small\quad\quad   - Graph traversal (Fabric++, Fabric\#) } \\ \quad - Parallel\\ {\small \quad\quad   - Dangerous structure (Aria) } \end{tabular} \\ \hline
\end{tabular}%
}
\caption{A taxonomy of DCC protocols}
\label{fig:concurrency}
\end{table}
Concurrency can improve system throughput if properly controlled.
Deterministic databases have been focusing on \emph{deterministic concurrency control} (DCC)
to improve throughput, while the state-of-the-art private blockchains also have similar mechanisms (but not named DCC explicitly).
By observing various sources of non-determinism in private blockchains and deterministic databases, 
we come up with a taxonomy unifying DCC protocols from both deterministic databases and private blockchains.
Table \ref{fig:concurrency} shows the taxonomy, 
in which a DCC protocol is either pessimistic or optimistic.
Pessimistic DCC protocols carefully pre-define a concurrent schedule $S$ for 
a block of transactions prior to their execution
such that every replica can follow $S$ to execute independently.
Pre-defining a concurrent schedule 
for a block of transactions 
requires knowing all their potential conflicts a priori. 
Applying static analysis on a transaction block 
can achieve that to a certain extent.
However, static analysis is insufficient when facing 
workloads with complex transaction logic (e.g., a stored procedure branches based on run-time query results),
which is common in smart contracts.
Optimistic DCC protocols provide deterministic concurrency at run-time,
hence, it requires no static analysis and is thus more suitable to blockchains.
Nonetheless, serializability conflicts may arise at run-time.
Therefore, optimistic DCC protocols would require 
certain transactions to abort and restart to uphold 
serializability.

\subsubsection{Pessimistic DCC (PDCC)}
The idea of pessimistic DCC is to pre-compute a concurrent schedule $S$
for a block $B$ of transactions.
To achieve that, all pessimistic DCC protocols carry out static analysis on an incoming block $B$ to obtain the read-write sets of its transactions to devise the schedule.
Once the read-write sets of the transactions in the block are obtained,
pessimistic DCC can devise a deterministic serializable schedule 
without any aborted transactions.
In practice, however, the read-write sets of the transactions in the block are hard to get using static analysis on real smart contracts. \\

\noindent 
\stitle{Deterministic pre-scheduling (Table \ref{fig:concurrency}a).}~
Calvin \cite{calvin} is an early deterministic database. 
Based on the read-write set 
from static analysis, 
it devises a concurrent schedule $S$ for the {lock manager} to follow
so that the lock manager can grant locks strictly based on the order imposed by the TIDs.
BOHM \cite{bohm} is another deterministic database.
It ``installs'' the devised schedule $S$ in its multi-versioned storage.
It first creates a placeholder for every item that appears in the write-sets, 
with its version number being the TID of the corresponding transaction. 
A transaction $T_i$ (TID=$i$) then reads only items whose version number is the largest 
among all those smaller than $i$ 
or waits until the desired placeholder is filled with a real value.
PWV \cite{pwv} is yet another deterministic database.
It explicitly constructs a dependency graph based on the block's read-write sets and uses that to devise an explicit 
serializable schedule whose commit order follows the TIDs.
ChainifyDB \cite{chainifydb} is a recent private blockchain,
which discusses how to turn an ordinary relational DBMS into a blockchain.
Under the hood, 
ChainifyDB \cite{chainifydb} is similar to PWV,
which explicitly constructs dependency graphs and deterministic
serializable schedules.\\

\noindent 
\stitle{Deterministic schedule execution (Table \ref{fig:concurrency}b).}~
Given a schedule $S$, 
as long as all replicas strictly honor $S$ to execute,
replica consistency can be achieved in a straightforward manner.

\subsubsection{Optimistic DCC (ODCC)} 
ODCC protocols provide deterministic concurrency at run-time and require no static analysis.
Given a transaction $T$, 
ODCC first obtains its \emph{deterministic
read-write set},
despite that the states among the replicas  may diverge due to different reasons (e.g., message delay).
Since the deterministic read-write sets of transactions in a block
may contain serializability conflicts,
ODCC then follows a \emph{deterministic commit} protocol to ensure serializability 
and replica consistency.\\

\stitle{Deterministic read-write sets (Table \ref{fig:concurrency}c).}
SOV and OE (since OE=SE, from now on, we simply focus on OE) 
have different sources of non-determinism when obtaining the read-write sets. 

For SOV-based private blockchains,
since a transaction $T$ is simulated
on a replica using that replica's
local \emph{latest} state,
$T$ may get divergent read-write sets from 
different replicas because different replicas may catch up with the latest states at different rates.
SOV pays the cost of an additional trip back to the client to 
reconcile the divergent read-write sets 
back to a deterministic one.

OE-based blockchains
like RBC \cite{rbc} and 
SE-based deterministic databases 
like Aria \cite{aria} 
use \emph{block snapshots} to ensure all replicas 
get identical read-write sets.
Specifically, 
they leverage the nature of block-based processing 
that the state after processing each block is identical  across replicas.
Hence, instead of obtaining the read-write set of a transaction based on the {latest} state of an individual replica,
they 
obtain the read-write set of a transaction based on
the state after executing a particular block.
We call that state a \emph{block snapshot}.
A block snapshot is deterministic. 
Hence, individual replicas can always use a particular block snapshot as a 
single source of truth to execute transactions independently.
\deleted{
For example, in RBC, a client who wants to submit a transaction $T$ first consults a (random) replica about 
the block number of its last completed block (say the replica replies block 9).
Since then that transaction $T$ would regard the snapshot of block 9 (i.e., 
the state after block 9) as the block snapshot 
to execute transactions on.
Hence, despite $T$ may get broadcasted to a replica late (e.g., the replica 
has already received transactions from block 11)
or early (e.g., the replica 
is still processing block 7),
$T$ would always be get executed against the same snapshot from block 9.
For the latter case, $T$ would wait until the required snapshot is ready.\\
}
\added{
For example, in Aria, a transaction $T$ in block $10$ would regard the snapshot of, say, block 9, as the block snapshot to execute on. Hence, despite $T$ may get broadcasted to a replica $R$ late (e.g., $R$
has already received transactions from block 12)
or early (e.g., $R$
is still processing block 7),
$T$ would always get executed against the same snapshot from block 9.
For the latter case, $T$ would wait until the required snapshot is ready.\\}

\stitle{Deterministic commit  (Table \ref{fig:concurrency}d).} 
Given the identical read-write sets across all replicas,
currently, 
all private blockchains conduct deterministic commit \emph{serially}.
Deterministic databases are way more advanced in this aspect.
They are able to achieve \emph{parallel commit}.

To ensure the commit process is deterministic,
i.e., replicas independently can validate and abort/commit identical sets of transactions,
all blockchains that adopt ODCC 
commit each transaction one by one, serially.
In this regard, Fabric validates transactions in a block in the order of their TIDs and aborts a transaction on seeing a stale read. A stale read can be regarded as a ``dangerous structure'' \cite{ssi} of read(r)-write(w)-dependency (i.e., a transaction reads a record that has been overwritten by another transaction).
However, such a dangerous structure is often overly conservative 
and leads to many false aborts.
Specifically, consider two transactions $T_1$ and $T_2$,
where $T_2$ first reads $x$, followed by $T_1$ updates $x$ but $T_1$ gets committed first.
While the read of $T_2$ is a stale read (hence Fabric would abort $T_2$), $T_2$ indeed can still commit by viewing the resulting serializable schedule as $T_2 \rightarrow T_1$.
Hence, 
RBC leverages that opportunity and validates transactions based on another dangerous structure derived 
from serializable snapshot isolation (SSI) \cite{ssi}.
RBC gets fewer false aborts.
But it still needs to validate transactions serially to uphold determinism.
Validation based on dangerous structures may lead to many false aborts 
because a dangerous structure can only capture local dependencies around a particular transaction, missing the big picture.
Formally, serializability shall actually look for any cycle in the complete transaction dependency graph.
Hence, private blockchains that enhance Fabric (e.g., Fabric++ \cite{fabric++} and Fabric\# \cite{fabricsharp})
construct a dependency graph for the transactions in the ordering service and carry out early validation there.
With a full picture of the dependency graph, there would be fewer aborts, and aborted transactions would not be included in a block and broadcast to the replicas\footnote{Hence, in Fabric++ and Fabric\#, their validation phase only validates the security elements (e.g., signatures) but not serializability.}.
Nonetheless, the expensive and unparallelizable   graph traversal would become a bottleneck when under high contention (Section \ref{sec:evaluation}).

\begin{figure}
    \centering
    \includegraphics[width=0.27\linewidth]{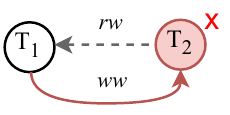}
    \caption{\added{Aria aborts a transaction on seeing a ww-dependency}}
    \vspace{0.5em}
    \label{fig:aria_example}
\end{figure}
Aria as the ODCC inside a deterministic database 
can commit transactions in parallel at the cost of more aborts.
\deleted{
For example, on seeing a write(w)-write(w)-dependency (i.e., two transactions having overlapping write-sets),
Aria aborts the one with a larger TID.
Transactions can check this dangerous structure in parallel by carrying out set-intersections.
After validation, transactions that can be committed always have non-overlapping write sets and thus they can be committed in parallel.
Although Aria can commit transactions in parallel, 
it has a high abort rate in return
--- when there are multiple transactions updating the same record,
only one of them can commit.
In fact, two transactions $T_1$ and $T_2$ that update the same item $x$
can still form a serializable schedule.
}\added{
Specifically, one dangerous structure that Aria is based on is write(w)-write(w)-dependency (i.e., two transactions having overlapping write-sets). 
On seeing a ww-dependency, 
Aria aborts the one with a larger TID.
For example, in Figure \ref{fig:aria_example}, 
Aria aborts $T_2$ on seeing $T_1 \xrightarrow{ww} T_2$.
Breaking ww-dependencies not only avoids cycles (hence serializable) but also allows parallel commit --- after validation, transactions that can be committed always have non-overlapping write sets and thus they can commit in parallel.
However, 
it has a high abort rate in return
--- whenever multiple transactions are updating the same record,
only one of them can commit.
In fact, if the rw-dependency in Figure \ref{fig:aria_example} does not actually exist, Aria still aborts $T_2$ even though there is no cycle.
}

\subsection{Storage} \label{sec:storage}

Main-memory blockchains store data in memory
and group commit logs to disk \cite{resilientdb}.
But enterprise-grade private blockchains are closer to traditional databases 
that store both data and log on disk and use DRAM for caching. 
Disk-oriented blockchains are prevalent in the market
because of cost (Amazon sells DRAM DDR4 ECC for around USD 9.57/GB 
vs. 
NVMe SSD around USD 0.23/GB; 42$\times$ cost gap) and applications.
Deterministic databases are closer to main-memory databases/blockchains
that store data in main memory and use disk for logging.
This explains why Aria as an ODCC protocol is not the best for our setting
because transactions in main-memory databases
are usually short-lived (in the order of $\mu s$) 
and thus weigh parallelism higher than aborts in their protocols \cite{silo}.

\subsection{Recovery} \label{sec:recovery}
There are two classes of logging techniques for recovery: logical logging and physical logging. Logical logging logs only the transaction commands.
Physical logging logs either the read-write sets or ARIES-like redo-undo log.
All SOV-based blockchains use physical logging.
RBC as an OE-based blockchain also uses physical logging.

Deterministic databases use logical logging which has almost no runtime overhead. 
In distributed databases,
parallel recovery from logical logs would generally incur non-determinism, because a replica may get recovered to another serializable database state unless specific treatments are provided \cite{replay}.
But that would not happen in deterministic databases since it can concurrently replay the input transactions with determinism.

\section{Harmony}\label{sec:harmony}
With the understanding of the close connections, 
private blockchains can actually absorb many techniques from deterministic databases. 
In this section, we present Harmony,
a new deterministic concurrency control protocol that supports \emph{parallel commit} with \emph{no application limitation}.
It is superior to the latest DCC Aria when applied to 
disk-oriented blockchain by featuring (i) \emph{low abort rate} (ii) \emph{hotspot resiliency} and (iii) \emph{inter-block parallelism}. 
\added{
Note that improving the concurrency control of a disk-based database at first glance may seem to be a minor issue because 
a database's performance seems to be dominated by its disk latency.
Yet, decades of concurrency control research has shown the otherwise because disk-based databases 
would use all sorts of techniques 
(e.g., DRAM buffer pools and group commit) to hide I/O latency.
With those, higher concurrency (and thus higher CPU utilization) can be translated as better overlapping and I/O hiding,
and reducing aborts can be translated as maximizing the useful work done per I/O, all of which could lead to higher throughput \cite{gray_notes}. These factors motivate the design of better (deterministic) concurrency control for the database layer of disk-based blockchains.
}

\subsection{Overview}
Harmony is an optimistic deterministic concurrency control (ODCC) protocol.
On receiving a block of transactions, it executes transactions in two steps: a simulation step and a commit step, without using static analysis.
Like most ODCCs, its simulation step obtains \emph{deterministic read-write sets} (Table \ref{fig:concurrency}c) by simulating transactions against the same block snapshot. 
After all transactions of the current block finish simulation, it enters the commit step to carry out \emph{deterministic commit} (Table \ref{fig:concurrency}d).

To reduce the abort rate, recent private blockchains have been devoting efforts to exploring broader dependency information (e.g., Fabric++ \cite{fabric++} and Fabric\# \cite{fabricsharp} traverse the whole dependency graph).
However, that is unparallelizable with high overhead especially when the graph is large (Section \ref{sec:evaluation}).
In contrast, Harmony upholds parallelism by only requiring each transaction to examine {dangerous structures} without cross-thread coordination.
To reduce the abort rate, 
Harmony carries out \added{\emph{abort-minimizing validation}} (Section \ref{sec:validation}) that \emph{commits all transactions} as long as their rw-dependencies do not exhibit a kind of new ``\emph{backward dangerous structure}'', while all the other dependencies (e.g., ww-dependencies) are 
managed by \emph{update reordering} (Section \ref{sec:reordering}) to ensure serializability without any abort.
Harmony ensures that the whole process is parallelizable and requires no graph traversal.

Harmony achieves good  resource utilization
by supporting \emph{inter-block parallelism} (Section \ref{sec:inter-block}) --- 
a straggler transaction in block $i$ cannot detain the next block $(i+1)$.
Supporting that requires dealing with a non-deterministic view of inter-block dependency due to network asynchrony. For example, if $T_1$ in block $i$ depends on $T_2$ in block $(i+1)$ (it is possible when these two blocks are concurrent), a replica may miss such inter-block dependency if block $(i+1)$ is delayed, causing it sees different dependencies from the other replicas. Harmony designs an \emph{inter-block abort policy} to ensure deterministic commit under network asynchrony.

\subsection{\added{Abort-minimizing} Validation} \label{sec:validation}
For ease of presentation,
we assume inter-block parallelism is disabled in this section.
In this case, a transaction $T$ starts execution only after the previous block is finished and $T$'s simulation step reads the block snapshot of the previous block. 
We discuss the enabling of inter-block parallelism in Section \ref{sec:inter-block}.

Formally, there are three types of dependencies:
\begin{itemize}

\item \emph{rw-dependency}: 
transaction $T_i$ \emph{rw-depends} on $T_j$ 
if  $T_i$ reads any before-image of $T_j$'s writes,
denoted as $T_i\xrightarrow{rw}T_j$
\added{
\emph{or equivalently}  $T_j \xleftarrow{rw} T_i$. 
The latter format is used when we discuss the ``backward'' dangerous structure later.}

\item \emph{ww-dependency}: 
transaction $T_i$ \emph{ww-depends} on $T_j$
if $T_j$ overwrites any $T_i$'s write, denoted as $T_i \xrightarrow{ww} T_j$.

\item \emph{wr-dependency}:
transaction $T_i$ \emph{wr-depends} on $T_j$
if any of $T_i$'s writes is read by  $T_j$,
denoted as $T_i \xrightarrow{wr} T_j$.
\end{itemize}

We define \emph{rw-subgraph} as the subgraph induced by rw-dependency edges in the dependency graph (i.e., the dependency graph with only rw-dependencies).
Recall that to uphold serializability, no cycles can occur in the whole dependency graph. 
Harmony's abort-minimizing validation focuses on the rw-subgraph only
because other dependencies would be handled by update reordering (Section \ref{sec:reordering}) without any abort.
It is based on the following rule:

\begin{myrule} \label{rule:rw-subgraph}
(Validation Rule) A transaction $T_j$ is aborted if it resides in a \ul{backward dangerous structure}: $T_i \xleftarrow{rw} T_j \xleftarrow{rw} T_k$, $i < j$ and $i \le k$.
\end{myrule}

\begin{figure}
    \quad
    \begin{subfigure}{0.4\linewidth}
    \centering
    \includegraphics[width=0.5\linewidth]{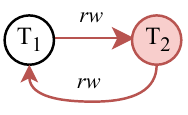}
    \caption{A match with only two transactions}\vspace{-0.1cm}
    \label{fig:match_1}
    \end{subfigure}
    \hfill
    \begin{subfigure}{0.5\linewidth}
    \centering
    \includegraphics[width=0.8\linewidth]{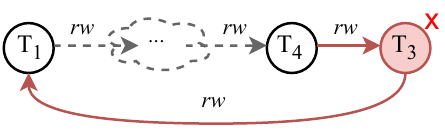}
    \caption{A match with at least three transactions}\vspace{-0.1cm}
    \label{fig:match_2}
    \end{subfigure}
    \quad
    \caption{Examples of backward dangerous structure}
    \vspace{0.5em}
    \label{fig:backward_example}
\end{figure}

Rule \ref{rule:rw-subgraph} is inspired by SSI (serializable snapshot isolation) \cite{ssi, ssi-2, ssi-3}, but with two notable differences:
(1) it imposes a reverse order on the TIDs
\added{(where the oldest transaction $T_k$ with the largest TID comes first 
and the youngest transaction $T_i$ with the smallest TID comes last);}
 and more importantly 
(2) it imposes no constraint on the ww-dependencies (i.e., non-first committer can still win).
Figure \ref{fig:backward_example} shows two examples that match the backward dangerous structure. Notice that a backward dangerous structure can be as small as having two transactions only since we allow $i=k$ (Figure \ref{fig:match_1}). A dangerous structure can also involve an arbitrary number of transactions (e.g., Figure \ref{fig:match_2}, where $T_1 \xleftarrow{rw} T_3 \xleftarrow{rw} T_4$).
Specifically, given any cycle in the rw-subgraph, denote the transaction with the smallest TID in the cycle as $T_i$.
Since $T_i$ is in a cycle, there exists $T_j$ such that $T_i \xleftarrow{rw}T_j$, and there also exists $T_j \xleftarrow{rw} T_k$. Thus, we have $T_i \xleftarrow{rw} T_j \xleftarrow{rw} T_k$, which is a backward dangerous structure because $T_i$ has the smallest TID. Therefore, eliminating all the backward dangerous structures breaks all the cycles in the rw-subgraph.

\deleted{
Formally, we prove the following theorem to demonstrate the correctness of the Validation Rule.

\begin{theorem} \label{thm:rw-subgraph}
The Validation Rule deterministically removes all cycles in the rw-subgraph.
\end{theorem}

\begin{proof}
(1) Determinism: the Validation Rule is deterministic because deterministic read-write sets (obtained based on a block snapshot) induce deterministic rw-dependencies, and the Validation Rule chooses a transaction to abort based on TIDs in the backward dangerous structure.
(2) Cycle removal (by contradiction): assume after applying the Validation Rule on all the transactions, the rw-subgraph still contains a cycle. 
Since a cycle must contain at least two transactions, we discuss two situations.
(a) If the cycle contains only two transactions, 
and the dependencies must be rw-dependencies in the current context,
the cycle must be the one in Figure \ref{fig:match_1}, which is a backward dangerous structure.
(b) If the cycle contains at least three transactions, denote the transaction with the smallest TID in the cycle as $T_i$.
Since $T_i$ is in a cycle, there exist $T_j$ such that $T_i \xleftarrow{rw} T_j$, and there also exists $T_j \xleftarrow{rw} T_k$ because the cycle contains at least three transactions. Thus, we have $T_i \xleftarrow{rw} T_j \xleftarrow{rw} T_k$, which is a backward dangerous structure because $T_i$ has the smallest TID.
Therefore, the cycle would contain a backward dangerous structure regardless of the number of transactions in it, which is a contradiction because it has been eliminated by the Validation Rule.
Hence, Theorem \ref{thm:rw-subgraph} is proved.
\end{proof}
}

\deleted{
Aborting based on the backward dangerous structure is more judicious than the typical dangerous structures found in other ODCCs. Compared to Fabric which aborts on seeing a \emph{single} rw-dependency induced by a stale read, Rule \ref{rule:rw-subgraph} is based on a \emph{pair} of dependencies, which is less likely to match and abort. Compared to RBC and Aria, Rule \ref{rule:rw-subgraph} does not abort any transaction on seeing ww-dependencies.
More importantly,
Rule \ref{rule:rw-subgraph} is not only for cycle removal, but also can ease the computation of update reordering by \emph{converting an expensive topological sort to quick-sort}.
\deleted{
Intuitively, eliminating the \ul{backward} dangerous structures actually
rectifies the rw-dependencies to largely ``go \ul{forward}'' (i.e., follow the TID order) such that sorting the TIDs is almost equivalent to a topological order of the rw-subgraph (more details in Section \ref{sec:reordering}).}
Rule \ref{rule:rw-subgraph} could possibly induce some false aborts (e.g., when the dashed lines do not exist in Figure \ref{fig:match_2}) like the other dangerous-structure-based ODCCs. 
However, eliminating all false aborts would require an expensive and unparallelizable graph traversal (e.g., Fabric++ and Fabric\#), which may outweigh the gain or even become a bottleneck when the dependency graph is complex (Section \ref{sec:evaluation}).
In contrast, Rule \ref{rule:rw-subgraph} can be checked in parallel because each transaction only needs to look at its local dependencies.
}

\added{
Similar to Fabric \cite{fabric}, RBC \cite{rbc}, and Aria \cite{aria},
Rule \ref{rule:rw-subgraph} could induce some false aborts (e.g., when the dashed arrows do not exist in Figure \ref{fig:match_2}). 
Eliminating all false aborts would require an expensive and unparallelizable graph traversal (e.g., Fabric++ \cite{fabric++} and FastFabric\# \cite{fabricsharp}), which may outweigh the gain (Section \ref{sec:evaluation}).
Nonetheless, 
Harmony can induce fewer false aborts than all the dangerous-structure-based ODCCs \emph{in all cases}.
Specifically, compared to Fabric who hastily aborts $T_2$ 
on seeing a \emph{single} rw-dependency
$T_1 \xleftarrow{rw} T_2$, 
Rule \ref{rule:rw-subgraph} is more judicious as it only aborts $T_2$ on seeing \emph{both} $T_1 \xleftarrow{rw} T_2$ \emph{and} $T_2 \xleftarrow{rw} T_3$. %
Compared to RBC and Aria which also abort a transaction based on seeing a pair of rw-dependencies, Harmony would not abort on seeing a ww-dependency. For ww-dependencies, 
Harmony reorders the transactions 
to make them all be able to commit,
instead of aborting them (Section \ref{sec:reordering}).}
\begin{algorithm}
\scriptsize
\SetAlgoLined
\textbf{Initialize} 
 \ForEach{transaction $T_j$}{
   $T_j.\texttt{min\_out} \gets j + 1$\\
   $T_j.\texttt{max\_in} \gets \texttt{-inf}$
}

\LeftComment{In the simulation step:}

\Hook{$on\_seeing\_rw\_dependency (T_i \xleftarrow{rw} T_j)$}{
    $T_j.\texttt{min\_out} \gets \min (i, T_j.\texttt{min\_out})$ \\
    $T_i.\texttt{max\_in} \gets \max (j, T_i.\texttt{max\_min})$
}

\LeftComment{In the commit step:}

\Hook{$on\_entering\_commit (T_j)$ }{
    \uIf{$T_j.\texttt{min\_out} < j$ \KwAnd $T_j.\texttt{min\_out} \le T_j.\texttt{max\_in}$ }{
        Abort($T_j$) \\
    }\uElse {
        Apply\_write\_sets($T_j$) \Comment{See Section \ref{sec:reordering} \quad \quad} \\
    }
 }
\caption{Harmony (no inter-block parallelism)}
\label{alg:validation}
\end{algorithm}

Algorithm \ref{alg:validation} shows the implementation of Rule \ref{rule:rw-subgraph}.
It contains two \emph{event handlers} that would be invoked when a specific event happens  (e.g., HarmonyBC is implemented using PostgreSQL's event API).
In the simulation step, $on\_seeing\_rw\_dependency()$ is invoked when a rw-dependency is found (line \#6).
The handler maintains two variables for each transaction:
\begin{itemize}[leftmargin=2em]
    \item \emph{minimal outgoing TID} of $T_j$: $\texttt{min\_out} = \min \{i | T_i \xleftarrow{rw} T_j, i < j\}$. 
If no $T_i \xleftarrow{rw} T_j (i < j)$, define $\texttt{min\_out} = (j+1)$;
    \item \emph{maximum incoming TID} of $T_j$ : $\texttt{max\_in} = \max \{k | T_j \xleftarrow{rw} T_k\}$. If no $T_j \xleftarrow{rw} T_k$, define $\texttt{max\_in} = \texttt{-inf}$.
\end{itemize}
These two variables are used for checking Rule \ref{rule:rw-subgraph} in the commit step.
When a transaction $T_i$ enters the commit step, $on\_entering\_commit()$ is invoked and line \#12 effectively checks Rule \ref{rule:rw-subgraph}.
If the transaction is not aborted, it invokes $Apply\_write\_sets()$ to apply the write-sets which is detailed in Section \ref{sec:reordering}.
\added{Algorithm \ref{alg:validation} is highly parallel because both the simulation step and the commit step can process transactions concurrently, and the event handlers could be triggered in parallel.} We prove Algorithm \ref{alg:validation} effectively checks Rule \ref{rule:rw-subgraph} as follows:

\begin{proof}
We prove by showing that (a) a transaction aborted by Rule \ref{rule:rw-subgraph} is also aborted by Algorithm \ref{alg:validation} and (b) vice versa.
(a) Consider $T_j$ is aborted due to a backward dangerous structure $T_i \xleftarrow{rw} T_j \xleftarrow{rw} T_k, i < j$ and $i \le k$. By the definition of $\texttt{min\_out}$, $\texttt{min\_out} \le i$; and by the definition of $\texttt{max\_in}$, $\texttt{max\_in} \ge k$. Thus, 
Algorithm \ref{alg:validation} would abort 
$T_j$ because it satisfies the condition in line \#12.
(b) Consider $T_j$ is aborted by Algorithm \ref{alg:validation} due to $\texttt{min\_out} < j$ and $\texttt{min\_out} \le \texttt{max\_in}$. There exists $T_{\texttt{min\_out}}$ and $T_{\texttt{max\_in}}$ such that $T_{\texttt{min\_out}} \xleftarrow{rw} T_j \xleftarrow{rw} T_{\texttt{max\_in}}$ is a backward dangerous structure. Thus $T_j$ is also aborted by Rule \ref{rule:rw-subgraph}. 
\end{proof}

Harmony does not have phantoms because a predicate-read will also trigger $on\_seeing\_rw\_dependency()$ if it induces a rw-dependency.
Let $e$  be the number of rw-dependencies of a transaction $T_j$.
Algorithm \ref{alg:validation} only takes 
$O(e)$ time because each rw-dependency of $T_j$ is only examined once.
In contrast, the validation phase in the original 
SSI takes $O(e^2)$ time \cite{ssi} to  
check every pair of rw-dependencies incident to/from  transaction $T_j$.

\subsection{Update Reordering and Coalescence} \label{sec:reordering}
Given an acyclic rw-subgraph, we now discuss how Harmony ensures the whole dependency graph is acyclic by \emph{update reordering}, and how it achieves parallel commit using \emph{update coalescence}.
We still assume inter-block parallelism is disabled here.

\subsubsection{Update Reordering} \hfill

Recall that Harmony's simulation step obtains deterministic read-write sets like other ODCCs. Harmony keeps the update commands (e.g., add(x, 10)) in the write-set instead of the updated values (e.g., x = 20).
The update commands are collected in the simulation step when a transaction starts working on an \texttt{UPDATE} statement. 
For example, when $T$ starts \texttt{UPDATE} bank \texttt{SET} $\texttt{balance} = \texttt{balance} + 10$ \texttt{WHERE} $\texttt{id}$ = `$\texttt{Alice}$', Harmony extracts the update command of $add(\texttt{Alice.balance}, 10)$ from the physical plan and stores it in $T$'s write-set without evaluating its value. 
Harmony ensures serializability by reordering the update commands using an efficient reordering algorithm.
The update commands are then evaluated one after another following that order.

For example, assume $T_1$ and $T_2$ of the same block update $x$ concurrently by $add(x, 10)$ and $mul(x, 3)$, respectively. Existing snapshot-based ODCCs (Table \ref{fig:concurrency}(c)) would evaluate both updates against the snapshot value of $x$ (say $x=10$). Thus, $T_1$ and $T_2$ would store $x=20$ and $x=30$ in their write-sets, respectively. 
However, one of $T_1$ and $T_2$ has to be aborted because neither $x=20$ nor $x=30$ is a serializable state of committing both 
(e.g., Aria aborts $T_2$ due to $T_1 \xrightarrow{ww} T_2$).
In contrast, Harmony collects $add(x, 10)$ and $mul(x, 3)$ in the simulation step and uses the Reordering Rule (see below) to determine their order in the commit step.
A correct order is necessary to ensure serializability because evaluating $T_2$'s update after $T_1$ induces two dependencies: 
(1) $T_1 \xrightarrow{ww} T_2$. Since $T_2$'s update command $mul(x, 3)$ is a read-modify-write operation, there is another dependency (2) $T_1 \xrightarrow{wr} T_2$.
The induced dependencies could violate serializability
if there is already a dependency $T_1 \xleftarrow{rw} T_2$ in the rw-subgraph, because it will 
form a cycle with (1) and/or (2).
In this case, $T_1$ should update after $T_2$ such that $T_1 \xleftarrow{ww/wr} T_2$ is in the same direction as $T_1 \xleftarrow{rw} T_2$, which does not induce a cycle.
Following this order both transactions can be committed:
$T_2$ first updates $x = mul(x, 3) = 30$, and then $T_1$ evaluates $add(x, 10) = 40$.
That is why Harmony does not have to abort a transaction on seeing a ww-dependency like Aria.
As a tradeoff, evaluating the update commands one after another could impede parallelism. We use \emph{update coalescence} to resolve that (Section \ref{sec:coalesence}).

\deleted{
In addition, with SQL supporting read-modify-write in a single \texttt{UPDATE} statement (e.g., \texttt{UPDATE} bank \texttt{SET} $\texttt{balance} = \texttt{balance} + 10$ \texttt{WHERE} $\texttt{id}$ = `$\texttt{Alice}$'), Harmony is resilient to hotspots if the transactions read-modify-write the hotspot using such \texttt{UPDATE} statements because updates on hotspots do not induce conflicts in Harmony.
As a tradeoff, evaluating the update commands one after another could impede parallelism. We propose update coalescence to resolve that (Section \ref{sec:coalesence}).
}

Intuitively, the order of the update commands has to follow the direction of the dependencies in the rw-subgraph, such that the edges of ww and rw-dependencies also follow that direction to avoid cycles. The following theorem formalizes this principle.

\begin{theorem} \label{thm:reorder}
Given an acyclic rw-subgraph, the complete dependency graph is also acyclic if the update commands are reordered based on the \ul{topological order of the acyclic rw-subgraph}.
\end{theorem}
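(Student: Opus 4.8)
The plan is to use the topological order of the acyclic rw-subgraph as a single serialization order that \emph{every} edge of the complete dependency graph respects; acyclicity is then immediate. Let $\prec$ denote the total order produced by topologically sorting the (acyclic) rw-subgraph, so that $T_i \xrightarrow{rw} T_j$ implies $T_i \prec T_j$, and recall that Harmony applies the update command on each item following $\prec$. First I would argue that under this application order every rw-, ww-, and wr-edge points ``forward'', i.e., from the $\prec$-smaller to the $\prec$-larger endpoint. Given that, any directed cycle $T_{c_1} \rightarrow T_{c_2} \rightarrow \cdots \rightarrow T_{c_m} \rightarrow T_{c_1}$ would force $T_{c_1} \prec T_{c_2} \prec \cdots \prec T_{c_1}$, contradicting the irreflexivity of a total order, so no cycle can exist.

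I would then dispatch the three edge types in turn. The rw-edges are forward by the very choice of $\prec$; the crucial point is that rw-dependencies are frozen before the commit step, since they are fixed by what each transaction read from the commit-order-independent block snapshot together with the (also fixed) write-sets, so reordering updates can neither create nor reverse an rw-edge. A ww-edge $T_a \xrightarrow{ww} T_b$ occurs exactly when $T_b$ overwrites $T_a$ on some item $x$, which by the definition of the application order means $T_a$'s update on $x$ was evaluated before $T_b$'s, hence $T_a \prec T_b$. A within-block wr-edge $T_a \xrightarrow{wr} T_b$ can only be born during update evaluation: the simulation reads all come from the snapshot and therefore never observe a current-block write, whereas a read-modify-write command (e.g. $mul(x,3)$) reads the most recently applied value of $x$; thus $T_b$ reads $T_a$'s write only if $T_a$'s update ran earlier in $\prec$, again giving $T_a \prec T_b$.

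The main obstacle is precisely this last bookkeeping for the ww- and wr-edges, in particular nailing down that within-block wr-dependencies arise \emph{only} from read-modify-write evaluation in the commit step (the snapshot reads of the simulation step contribute none), so that they are necessarily co-oriented with the application order $\prec$. Once all three edge classes are shown to be forward-pointing in $\prec$, the cycle contradiction above closes the argument. I would also remark, though it is not needed for acyclicity, that any topological order works in principle and that Rule~\ref{rule:rw-subgraph} is what makes a cheap choice available: having removed the backward dangerous structures, the rw-edges are nearly aligned with the TID order, so the required topological sort collapses essentially to sorting transactions by TID.
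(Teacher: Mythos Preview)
Your proposal is correct and follows essentially the same approach as the paper: both argue that once updates are applied in the topological order $\prec$ of the rw-subgraph, every ww- and wr-edge (the latter arising only from read-modify-write commands in the commit step) is co-oriented with $\prec$, so together with the rw-edges no cycle can form. Your presentation is in fact cleaner, since you state the ``all edges point forward in $\prec$'' invariant explicitly, whereas the paper's proof picks a single ww/wr edge on the putative cycle and leaves the invariant implicit; one minor inaccuracy in your closing remark is that the cheap sort enabled by Rule~\ref{rule:rw-subgraph} is by \texttt{min\_out} (with TID as tiebreaker), not by TID directly.
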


\begin{figure}
    \centering
    \includegraphics[width=0.6\linewidth]{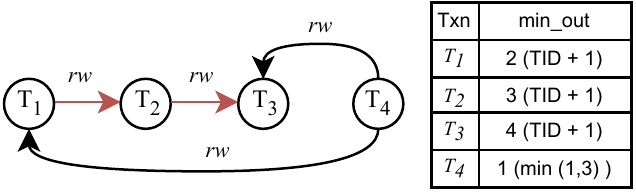}
    \caption{An example of update reordering}\vspace{0.5em}
    \label{fig:reorder}
\end{figure}

\begin{proof}
(By contradiction)
Assume there is still a cycle after reordering the update commands based on the topological order of the rw-subgraph. Then the cycle must contain $T_i \xrightarrow{wr / ww} T_j$ because the rw-subgraph is acyclic.
We only consider $T_i$ and $T_j$ are in the same block because of no inter-block parallelism.
(1) If $T_i \xrightarrow{ww} T_j$, $T_j$'s update is ordered after $T_i$. According to the definition of topological order, there is no directed path from $T_j$ to $T_i$. Thus, $T_i \xrightarrow{ww} T_j$ is not in a cycle, which is a contradiction.
(2) If $T_i \xrightarrow{wr} T_j$, such dependency is only possible when $T_j$ has a read-modify-write update command and that command is ordered after $T_i$'s update.
Therefore, $T_j$ is ordered after $T_i$ in the topological order.
Similar to (1), $T_i \xrightarrow{wr} T_j$ is not in a cycle, which is also a contradiction.
Hence, Theorem \ref{thm:reorder} is proved.
\end{proof}

Although useful, the topological sort in Theorem \ref{thm:reorder} is expensive and hard to parallelize.
Rule \ref{rule:rw-subgraph} now shines by not only ensuring an acyclic rw-subgraph, but also allowing the following Reordering Rule to avoid the topological sort.

\begin{myrule}\label{rule:reorder}
(Reordering Rule)
Given all backward dangerous structures are eliminated 
after Rule \ref{rule:rw-subgraph},
reorder the transactions that update the same record by the ascending order of their minimal outgoing TIDs (i.e., \texttt{min\_out}), and break the tie by their own TIDs.
\end{myrule}

Rule \ref{rule:reorder} transforms an expensive topological sort into a quick-sort (i.e., quick-sort by \texttt{min\_out}), and it enables parallel reordering because updates on different records can be sorted in parallel. 
Figure \ref{fig:reorder} shows an example graph without backward dangerous structures. A topological sort on it results in $[T_4, T_1, T_2, T_3]$.
Suppose only $T_2$ and $T_4$ update $x$, Harmony only needs to quick-sort the \texttt{min\_out}s of $T_2$ and $T_4$ without traversing the whole graph like the topological sort. Nonetheless, the resulting order (i.e., $[T_4, T_2]$) is consistent with the topological ordering.
Besides, if $T_1$, $T_2$, and $T_3$ update $y$, the two updater lists (i.e., $[T_2, T_4]$ on $x$ and $[T_1, T_2, T_3]$ on $y$) can be sorted in parallel. 
It is easy to show that Rule \ref{rule:reorder} is deterministic because the ordering is based on TIDs that are consistent on all replicas.
The following theorem shows the correctness of Rule \ref{rule:reorder}.

\begin{theorem} \label{thm:min_back}
After applying Rule \ref{rule:rw-subgraph}, the ascending order of \texttt{min\_out}s is equivalent to a topological order of the rw-subgraph.
\end{theorem}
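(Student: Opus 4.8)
The plan is to prove the statement by showing that the strict total order obtained by sorting the surviving transactions by \texttt{min\_out} ascending, breaking ties by TID (exactly as the Reordering Rule prescribes), places the tail of every rw-edge strictly before its head. Concretely, for an arbitrary forward rw-edge $T_a \xrightarrow{rw} T_b$ (equivalently $T_b \xleftarrow{rw} T_a$), I would show that the pair $(T_a.\texttt{min\_out}, a)$ is lexicographically smaller than $(T_b.\texttt{min\_out}, b)$, where the second coordinate is the TID. Since TIDs are distinct this lexicographic comparison is a genuine strict total order, so establishing the inequality for every edge simultaneously proves both that the rw-subgraph is acyclic and that the induced order is a valid topological order. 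The only structural facts I would need are that $T_j.\texttt{min\_out} \le j+1$ for every transaction (it is either a strictly smaller outgoing target or, by definition, the sentinel $j+1$), together with the survival guarantee supplied by Rule \ref{rule:rw-subgraph}: every committed $T_j$ must fail the abort test, so $T_j.\texttt{min\_out} \ge j$ or $T_j.\texttt{min\_out} > T_j.\texttt{max\_in}$.

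First I would handle the forward case $a < b$. Here it suffices to rule out $T_a.\texttt{min\_out} > T_b.\texttt{min\_out}$, since when the two \texttt{min\_out} values are equal the TID tie-break already orders $T_a$ before $T_b$. Suppose for contradiction that $m := T_b.\texttt{min\_out} < T_a.\texttt{min\_out} \le a+1$; then $m \le a < b$, so $m$ is a genuine backward outgoing target of $T_b$, witnessing $T_m \xleftarrow{rw} T_b$ with $m < b$. The given edge $T_b \xleftarrow{rw} T_a$ completes the pattern $T_m \xleftarrow{rw} T_b \xleftarrow{rw} T_a$ with $m < b$ and $m \le a$, which is precisely a backward dangerous structure that Rule \ref{rule:rw-subgraph} aborts at the middle transaction $T_b$ --- contradicting that $T_b$ survived. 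Hence $T_a.\texttt{min\_out} \le T_b.\texttt{min\_out}$ and the edge is respected.

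Next I would handle the backward case $a > b$. Now the edge itself is a backward outgoing edge of $T_a$ into the smaller TID $b$, so immediately $T_a.\texttt{min\_out} \le b$. The same edge is an incoming edge of $T_b$ from the larger TID $a$, giving $T_b.\texttt{max\_in} \ge a > b$. I would then argue that $T_b$ can have no backward outgoing edge at all: if $T_b.\texttt{min\_out} < b$ then $T_b.\texttt{min\_out} < b < a \le T_b.\texttt{max\_in}$, so $T_b$ satisfies both clauses of the abort condition and would not survive. Therefore $T_b.\texttt{min\_out} = b+1$, and combining the two bounds yields $T_a.\texttt{min\_out} \le b < b+1 = T_b.\texttt{min\_out}$, a strict inequality that places $T_a$ before $T_b$ independent of the tie-break.

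Having covered both directions, every rw-edge runs forward in the $(\texttt{min\_out}, \mathrm{TID})$ order, so that order is a topological order and the theorem follows. I expect the main obstacle to be bookkeeping the direction conventions cleanly: the backward-arrow notation $T_i \xleftarrow{rw} T_j$ versus the forward edge it encodes, and correctly identifying which transaction plays the aborted middle role in the dangerous structure, since a single misread of an arrow collapses the case analysis. A secondary subtlety I would make explicit rather than gloss over is why, in the forward case, $m$ is a real target (both $m \le a$ and $m < b$) and not the sentinel $b+1$; this follows immediately from $m < T_a.\texttt{min\_out} \le a+1$ combined with $a < b$.
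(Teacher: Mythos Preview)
Your proof is correct. The core idea matches the paper's: show that every surviving rw-edge $T_a \xrightarrow{rw} T_b$ respects the \texttt{min\_out} order by contradiction, using the abort condition to eliminate the offending head $T_b$. The decomposition differs, however. The paper splits on whether the head $T_b$ has any backward outgoing rw-edge (if not, $T_b.\texttt{min\_out}=b+1$ immediately dominates; if so, build the dangerous structure), whereas you split on the TID direction of the edge itself ($a<b$ versus $a>b$). Your organization buys two things the paper's proof glosses over: you handle the tie-breaking rule explicitly via the lexicographic $(\texttt{min\_out},\mathrm{TID})$ order, and your case analysis avoids the paper's unjustified intermediate claim ``$i > T_i.\texttt{min\_out}$'' in its case (2), which fails when $j>i$ even though the paper's conclusion still goes through by a slightly different chain of inequalities. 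Both routes ultimately hinge on the same observation: a violation forces $T_b$ to simultaneously have a small backward \texttt{min\_out} and a large \texttt{max\_in}, triggering Rule~\ref{rule:rw-subgraph}.
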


\begin{proof}
(By contradiction)
Assume after applying the Validation Rule, $T_j.\texttt{min\_out} < T_i.\texttt{min\_out}$ but $T_j \xleftarrow{rw} T_i$ (i.e., \texttt{min\_out} order is not consistent with the topological order of the rw-subgraph).
(1) If there is no $T_k \xleftarrow{rw} T_j (k < j)$, by the definition of \texttt{min\_out}, $T_j.\texttt{min\_out} = j+1$ and $T_i.\texttt{min\_out} \le j < T_j.\texttt{min\_out}$. Therefore, it contradicts with $T_j.\texttt{min\_out} < T_i.\texttt{min\_out}$.
(2) If there exists $T_k \xleftarrow{rw} T_j (k < j)$, by the definition of \texttt{min\_out}, $T_j.\texttt{min\_out} \le k < j$ and $i > T_i.\texttt{min\_out}$. By the definition of \texttt{max\_in}, $T_j.\texttt{max\_in} \ge i$.
Therefore, $T_j.\texttt{max\_in} \ge i > T_i.\texttt{min\_out} > T_j.\texttt{min\_out}$. Together with $j > T_j.\texttt{min\_out}$, $T_j$ should have been aborted by the Validation Rule. Thus it is a contradiction.
\end{proof}

Reordering is a common technique to reduce aborts.
Rather than reordering some dependencies after the \emph{values} are computed \cite{aria,fabric++,fabricsharp,tictoc,batch-occ},
Harmony can reorder more by evaluating
the \emph{commands} in the commit step.
IC3 \cite{ic3} and ROCOCO \cite{atomic-commit} can achieve optimal reordering (i.e., minimal aborts). 
Yet they first need a static analysis to get the complete picture,
which is not suitable for many blockchain workloads.

\begin{figure}
    \centering
    \includegraphics[width=0.75\linewidth]{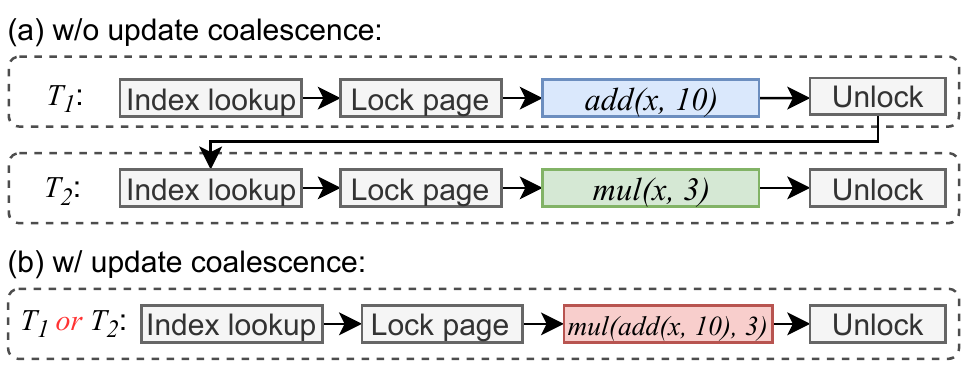}
    \caption{Physical plans of the update commands with and without coalescence}\vspace{0.5em}
    \label{fig:coalesce}
\end{figure}
\subsubsection{Update coalescence} \label{sec:coalesence}\hfill\\
Although Rule \ref{rule:reorder} efficiently computes a serializable update order, 
efficiently enforcing that order when applying the update commands needs to solve two problems.
First, 
transactions updating the same record would have unnecessarily duplicated disk I/Os and locking (problem P1). 
Figure \ref{fig:coalesce} shows an example where $T_1$ and $T_2$ update $x$ by $add(x,10)$ and $mul(x,3)$, respectively, and $T_2$'s update is ordered after $T_1$.
Without update coalescence (Figure \ref{fig:coalesce}a), $T_2$'s physical plan of updating $x$ largely duplicates what $T_1$ does (e.g., index lookup and locking).
Second,
applying update commands one after another impedes parallelism (problem P2). In Figure \ref{fig:coalesce}a, since $T_2$ is ordered after $T_1$, $T_2$ would be detained by $T_1$ especially when $T_1$ experiences disk I/Os or $T_1$ itself is detained by another transaction.

We propose update coalescence that merges multiple update commands on the same record into one to mitigate the above problems.
Concretely, Harmony eliminates duplicated operations and merges the physical plans of updating the same record without affecting the semantics.
As shown in Figure \ref{fig:coalesce}b, the physical plan after coalescing encodes both $T_1$ and $T_2$'s updates on $x$ and respects the order of $T_1$ followed by $T_2$.
In this way, Harmony resolves P1.
To resolve P2, Harmony dedicates only one transaction to apply the coalesced update for each record. 
In Figure \ref{fig:coalesce}b, if $T_1$ works on $x$ first, it applies the coalesced update while $T_2$ can simply skip and work on other updates in parallel.

\begin{algorithm}
\scriptsize
\SetAlgoLined

\LeftComment{In the simulation step:}

\Hook{$on\_update (\texttt{key}, \texttt{update\_command})$}{
    $\texttt{update\_cmds} \gets \texttt{update\_reservation}.search(\texttt{key})$ \\
    $\texttt{update\_cmds}.append(\texttt{update\_command})$\\
    $T_{current}.\texttt{updated\_keys}.append(\texttt{key})$
}

\LeftComment{In the commit step, invoked by Algorithm \ref{alg:validation}}

\Pn{$Apply\_write\_sets (T)$ }{
    \ForEach{$\texttt{key} \in T.\texttt{updated\_keys}$}{
        $\texttt{update\_cmds} \gets \texttt{update\_reservation}.search(\texttt{key})$ \\
        \If{$\texttt{update\_cmds}.\texttt{handled} == \mathbf{False}$}{
            $\texttt{update\_cmds}.\texttt{handled} \gets \mathbf{True}$ \\
            $\texttt{update\_cmds}.filter(\texttt{update\_command}.T.aborted == \mathbf{False})$ \\
            $\texttt{update\_cmds}.sort\_by(\texttt{update\_command}.T.\texttt{min\_out})$\\
            $\texttt{coalesced\_update} \gets coalesce(\texttt{update\_cmds})$ \\
            $apply(\texttt{coalesced\_update})$
        }        
    }
 }
\caption{Update Reordering and Coalescence}
\label{alg:apply-writes}
\end{algorithm}

Algorithm \ref{alg:apply-writes} shows how update reordering and update coalescence fit into Harmony.
In Algorithm \ref{alg:apply-writes}, Harmony uses a
hash table (\texttt{update\_reservation}) to map a key to the list (\texttt{update\_cmds}) of the update commands related to it.
When the event handler $on\_update()$ is triggered (line \#2), 
the update command is appended to the corresponding \texttt{update\_cmds} list (line \#4).
The updated key is also appended to the \texttt{updated\_keys} list of the transaction (line \#5).
The function $Apply\_write\_sets(T)$ does the actual work of update reordering and coalescence, and it is invoked by Algorithm \ref{alg:validation} after $T$ passes the validation.
For each updated key of $T$ (line \#9), the corresponding \texttt{update\_cmds} is checked to see if it has already been handled by another transaction (line \#11).
Lines \#11 and \#12 ensure that for each updated record, only one transaction handles all updates related to it, and other transactions can skip and work on other updates in parallel (lines \#11 and \#12 are protected in a critical section).
To apply the updates, $T$ first filters out the update commands added by the aborted transactions in the \texttt{update\_cmds} (line \#13), and then it performs update reordering by sorting the \texttt{update\_cmds} based on the \texttt{min\_out}s as suggested by Rule \ref{rule:reorder} (line \#14).
After that, it coalesces the update commands (as demonstrated in Figure \ref{fig:coalesce}) and applies the coalesced update (line \#16).

There are two corner cases that are not explicitly handled in Algorithm \ref{alg:apply-writes}: (1) $T$ reads an $x$ that has been updated by itself via a command $op(x)$; and (2) $T$ updates $x$ more than once. For (1), $T$ retrieves $op(x)$ in the \texttt{update\_reservation} table and evaluates it for the read operation.
Notice that $op(x)$ may be evaluated twice in this case (i.e., once for the read in the simulation step and once after update reordering in the commit step), but both evaluations are guaranteed to return the same value since Rule \ref{rule:reorder} respects serializability. For (2), when $T$ inserts the second update command to $x$'s \texttt{update\_cmds} list, it coalesces the second update command with the first one such that the \texttt{update\_cmds} effectively only contains at most one update command for each transaction.

With update reordering and coalescence, Harmony 
achieves hotspot resiliency because 
the former allows all concurrent updaters to commit, and 
the latter coalesces many updates of a hotspot into one.
IC3 \cite{ic3} also merges commands to reduce overhead.
Yet, it is based on static analysis.
Overall, update reordering and coalescence 
handle updates 
at the command level rather than at the value level,
giving them a complete picture to optimize.
Nonetheless, those opportunities might be 
lost in case a smart contract developer 
separates the read-modify-write logic.
For example, consider 
\texttt{UPDATE ... SET X=X+1 WHERE Y=10},
which can be reordered and coalesce with other update statements.
However, the opportunity would be lost if 
the developer expresses the logic 
as three pieces in a stored procedure: 
(1) first read the value of {\tt X} using a SQL {\tt SELECT},
(2) increment {\tt X}, and
(3) finally write the updated value of \texttt{X} using a 
SQL {\tt UPDATE}.
For some cases, the query optimizer may be able to rewrite 
them into one.
Otherwise, smart contract developers are encouraged to 
express the entire read-modify-write logic as one SQL statement.

\deleted{
Since update reordering and coalescence optimize the \texttt{UPDATE} statements, they would work better if the read-modify-write operation is composed into one \texttt{UPDATE} statement (e.g., \texttt{UPDATE} bank \texttt{SET} \texttt{balance} = \texttt{balance} + 10 \texttt{WHERE} $\texttt{id}$ = `$\texttt{Alice}$') instead of separating it into a \texttt{SELECT} and an \texttt{UPDATE}.
}

\subsection{Inter-block Parallelism} \label{sec:inter-block}
Inter-block parallelism \cite{fabric,rbc,schain}
aims to improve resource utilization in case a straggler transaction in block $(i-1)$ detains the next block $i$. 
Concretely, 
when block $(i-1)$ spares some resources (e.g., most transactions in block $(i-1)$ are finished but a straggler is still running), Harmony would start some transactions from block $i$ to utilize the spared resources before the whole block $(i-1)$ finishes. When inter-block parallelism is enabled, transactions in block $i$ carry out the simulation step based on the snapshot of block $(i-2)$ instead of block $(i-1)$, since the latter may not be ready when block $i$ starts.
Although adjacent blocks could run concurrently, Harmony still runs the commit step of block $(i-1)$ before the commit step of block $i$ to uphold determinism.
As a tradeoff, inter-block parallelism may introduce inter-block dependencies that increase aborts. 
But our experiments (Section \ref{sec:effectiveness}) confirm that better resource utilization outweighs that because Harmony is especially good at reducing the abort rate. 
\deleted{
One reason is that Harmony minimizes aborts. Hence, overlapping the adjacent blocks would not significantly increase the number of dependencies. 
}

Harmony's validation (Rule \ref{rule:rw-subgraph}) has to be enhanced to handle inter-block dependencies. For example, if $T$ in block $i$ reads $x$ from the snapshot of block $(i-2)$ and $T'$ in block $(i-1)$ updates $x$, there is an inter-block rw-dependency denoted as $T' \xleftarrow{inter-rw} T$ (we add a prefix of ``intra'' or ``inter'' for clarity). 
Moreover, inter-block dependencies may cause non-determinism due to network asynchrony. For example, Figure \ref{fig:inter-block} shows a backward dangerous structure. A replica R1 who sees the whole backward dangerous structure would \ul{abort} $T_2$.
However, if block $(i+1)$ arrives late in replica R2
due to network asynchrony, R2 would \ul{commit} $T_2$ because when applying 
Rule \ref{rule:rw-subgraph}, 
R2 only sees $T_1 \xleftarrow{intra-rw} T_2$. This causes inconsistency between $R1$ and $R2$. 

\begin{figure}
    \centering
    \includegraphics[width=0.45\linewidth]{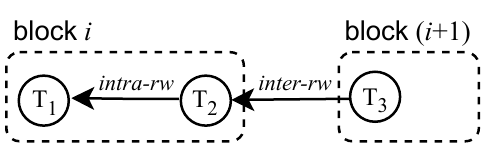}
    \caption{An example that may cause non-determinism if without the Enhanced Validation Rule}\vspace{0.5em}
    \label{fig:inter-block}
\end{figure}

To deal with the inter-block dependencies (denoted as \emph{inter-dep}), 
Harmony uses an enhanced validation rule.

\begin{myrule} \label{rule:enhanced-reorder}
(Enhanced Validation Rule)
For each \ul{generalized backward dangerous structure}: $T_i \xleftarrow{inter-dep / intra-rw} T_j \xleftarrow{inter-dep / intra-rw} T_k$, $i<j$ and $i \le k$, apply the following abort policy:
\begin{enumerate}[label=(\roman*)]
    \item if $T_j$ and $T_k$ are in the same block, abort $T_j$;
    \item othewise, abort $T_k$.
\end{enumerate}
\end{myrule}

Rule \ref{rule:enhanced-reorder} is identical to Rule \ref{rule:rw-subgraph} if there are no inter-block dependencies. 
The additional abort policy (ii) ensures determinism despite network asynchrony. Specifically, consider the case in Figure \ref{fig:inter-block}, if block $(i+1)$ is not delayed in the replica R1, it would abort $T_3$ instead of $T_2$
according to the abort policy (ii); suppose another replica R2 sees $T_1 \xleftarrow{intra-rw} T_2$ first and then sees $T_2 \xleftarrow{inter-rw} T_3$ later due to the delayed block $(i+1)$. R2 would also abort $T_3$. This ensures that R1 is consistent with R2.
We now prove that Rules \ref{rule:enhanced-reorder} and  \ref{rule:reorder} together ensure serializability and they are deterministic.

\begin{proof}
(1) Serializability: 
After applying both rules, 
we show that (a) there is no cycle within each block, and (b) there is no cycle across blocks. The correctness of (a) follows the fact that Rule \ref{rule:enhanced-reorder} is equivalent to Rule \ref{rule:reorder} when no inter-block dependencies are considered. 
For (b), 
Rule \ref{rule:enhanced-reorder} considers all the backward dangerous structures across blocks by adding inter-block dependencies into the generalized backward dangerous structure. Similar to how 
Rule \ref{rule:rw-subgraph} breaks the cycles in the rw-subgraph by eliminating backward dangerous structures, eliminating the generalized backward dangerous structure breaks the cycles across blocks.
(2) Determinism: even with inter-block parallelism, Harmony enforces the blocks to enter their commit steps in the order of block ID. Thus, the blocks apply Rules \ref{rule:enhanced-reorder} and \ref{rule:reorder} in a deterministic order. Since we have shown that Rules \ref{rule:enhanced-reorder} and \ref{rule:reorder}  are deterministic, we conclude that Harmony is deterministic with inter-block parallelism.
\end{proof}

\section{HarmonyBC}\label{sec:harmonybc}
In this section, we describe in detail how we build a private blockchain, namely HarmonyBC, using Harmony DCC.

\stitle{Architecture.}
Recall that private blockchains have two major architectures: Simulate-Order-Validate (SOV) and Order-Execute (OE). 
HarmonyBC adopts the OE architecture because it has no network overhead of sending large read-write sets.
More importantly, Harmony DCC fits the OE architecture better because it only requires blocks of transaction commands as input.

Similar to the state-of-the-art relational private blockchains \cite{rbc, chainifydb}, we implement Harmony DCC on
PostgreSQL \cite{postgresql}, a disk-based relational database as the database layer.
Therefore, HarmonyBC has the full functionalities of a relational database. Since PostgreSQL is process-based, 
concurrency is achieved by executing all transactions in a block using
multiple processes in parallel.

\stitle{Consensus and fault-tolerance.}
The consensus layer of HarmonyBC is a pluggable module and we currently support the Byzantine-fault tolerant HotStuff \cite{hotstuff} and crash-fault tolerant Kafka. 
Private blockchains typically rely on the consensus layer to uphold the safety and liveness guarantees \cite{blockchain-book, fabric, rbc}.
With a BFT consensus layer (e.g., HotStuff \cite{hotstuff}), HarmonyBC is also BFT because a faulty database node can only tamper its own state without affecting the non-faulty majority.

\stitle{Recovery.}
HarmonyBC enjoys lightweight logical logging for crash-fault recovery 
because of determinism. 
It persists the small input blocks before execution instead of the large ARIES-like log. 
It utilizes the \texttt{CHECKPOINT} command in PostgreSQL to flush dirty pages to disk every $p$ blocks (e.g., $p$ = 10).
After checkpointing, the ID of the latest checkpointed block is also persisted into a \texttt{block\_checkpoint\_log}.
During recovery, the replica loads the latest checkpoint and re-executes the blocks after the latest checkpointed block.
When performing checkpointing, the previous checkpoint is not overwritten via PostgreSQL's multi-versioned storage. Therefore, if a replica crashes during checkpointing, it can still recover from the previous checkpoint.

\stitle{Security.}
Private blockchains have to support node authentication to allow only identified nodes to join and must ensure tamper-proof \cite{fabric}.
For node authentication, we reuse the user authentication in the consensus layer (e.g., Kafka) such that only identified clients can submit transactions.
The replicas are also authenticated when connecting to the consensus layer.
For tamper-proof, since the input determines the final states in DCC, ensuring a tamper-proof input guarantees the tamper-proof of the final state.
Therefore, HarmonyBC includes in each block a hash of the previous block like a typical blockchain, such that any tampered block could be identified by back-tracing the hash values from the latest block. 

\stitle{Limitations.}
Harmony inherits some limitations from the OE architecture.
First, its trust model is fixed (i.e, majority) while SOV blockchains are able to program the trust logic in the \emph{endorsement policy} (e.g., it can define one-vote veto).
We plan to support it in future work by allowing users to define the endorsement policy via a stored procedure.
Second, it does not support workload partitioning like SOV blockchains \cite{fabric}.
In future work, we also plan to employ sharding  \cite{sharding,sharding-2,sharding-3}.

\section{Evaluation} \label{sec:evaluation}
We use two types of clusters. (1) Default cluster: 7 machines with Intel Xeon E5-2620v4, 64GB DRAM and 800GB SSD; nodes are connected using 1Gbps Ethernet.
This mimics the default setting of recent disk-based blockchains \cite{rbc,fabric++,fabricsharp}. 
(2) Cloud cluster: 80 \texttt{t3.2xlarge} instances on AWS, each of which has 8 vCPUs, 32GB DRAM, and 30GB SSD. Nodes are either located in LAN (5Gbps Ethernet) or in WAN (across 4 regions on 4 different continents).  We use that cluster to study the scalability.
All nodes run 64-bit CentOS 7.6 with Linux Kernel 3.10.0 and GCC 4.8.5.

We compare HarmonyBC with two SOV private blockchains:
(1) \ul{Fabric} v2.3 and (2) \ul{FastFabric\#} \cite{fabricsharp} (the latest progeny of Fabric; 
an optimized implementation of Fabric\# \cite{fabricsharp}; also better 
than Fabric++ \cite{fabric++} and FastFabric \cite{fastfabric}). We use the code provided by the author.
We follow \cite{tune-fabric} and 
tune the system parameters to optimal.
We also include an OE blockchain (3) \ul{RBC} \cite{rbc}. Since RBC is not open-sourced, we implemented it using the same framework as HarmonyBC.
Besides, we also chainified PostgreSQL using 
 Aria \cite{aria} as an OE private blockchain, namely (4) \ul{AriaBC}.
AriaBC is also implemented using the same framework as HarmonyBC 
because its original implementation is only an \emph{in-memory} standalone implementation without integrating into a real system.
Hence, all OE blockchains in this study are implemented using the same framework. 
All OE-based blockchain implementations (including HarmonyBC) provide the same security guarantees as the SOV-based blockchains and use the same consensus layer (Kafka by default). 
In fact, all SOV-based blockchains are given advantages 
because they use simpler key-value storage while RBC, AriaBC, 
and HarmonyBC are relational.
Systems that require static analysis (Calvin \cite{calvin}, ChainifyDB \cite{chainifydb}, PWV \cite{pwv}, and etc \cite{bohm,pwv,schain,quecc,caracal,single-thread-dcc,tpart})
are not included because of their application limitations.
We exclude blockchains that rely on trusted hardware (e.g., CCF \cite{ccf} uses SGX)
because HarmonyBC does not make any assumption on trusted hardware.

\begin{table}
\centering
\resizebox{0.85\linewidth}{!}{%
\begin{tabular}{c|cccccc} \hline
    $\texttt{skewness}$ & 0 & 0.2 & 0.4 & \textbf{0.6} & 0.8 & 1.0 \\ \hline
    YCSB      & 1.1\%  & 1.2\% & 2.4\% &  9.9\% & 38.3\% & 74.3\%  \\
    Smallbank & 0.1\%  & 0.1\% & 0.2\% &  1.5\% & 2.8\% & 10.6\%   \\ \hline\hline
    $\texttt{warehouses}$ & & 1 & \textbf{20} & 40 & 60 &  80  \\ \hline
    TPC-C      & & 47.9\%  & 8.8\% & 4.2\% & 2.7\% & 2.4\%   \\\hline
\end{tabular}
}
\caption{\added{Hit rate of the backward dangerous structure}}\label{tab:hit_rate}
\end{table}

We use YCSB \cite{ycsb}, Smallbank \cite{smallbank}, and \added{TPC-C \cite{tpcc}} benchmarks.
For YCSB, we set the number of keys to 10K and follow \cite{mocc,tictoc,aria} to wrap 10 operations into one transaction. Each operation has equal probabilities of being a simple \texttt{SELECT} or \texttt{UPDATE}.
For Smallbank, we also set the number of accounts to 10K and use the standard mix. 
Unless stated otherwise, the block size is tuned to be optimal for each setting and the contention is set to medium ($\texttt{skewness} = 0.6$ in YCSB and Smallbank, \added{$\texttt{warehouses}=20$ in TPC-C)}.
\added{All workloads can exercise HarmonyBC's dangerous structure in various degrees (see Table \ref{tab:hit_rate}).}
\added{
Since Fabric and FastFabric\# are not relational
(do not natively support TPC-C \cite{fabric-tpcc}), 
we use a separate section to discuss TPC-C results
(Section \ref{sec:tpcc}), in which Fabric and FastFabric\# are excluded.}

\subsection{Overall performance}

In this experiment, we measure the peak throughput and the end-to-end latency of the committed transactions for all systems using the default cluster.

Figures \ref{fig:smallbank_overall} and \ref{fig:ycsb_overall} show the results.
HarmonyBC attains 3.5$\times$ and 2.0$\times$ throughput over the best of the existing private blockchains (i.e., RBC in this experiment) in Smallbank and YCSB, respectively.
It also achieves around 70\% lower latency than SOV blockchains (Fabric and FastFabric\#) because the OE architecture has fewer round-trips.
RBC has similar latency as HarmonyBC but attains lower throughput because RBC admits a lower level of concurrency (see Section \ref{sec:exp_block}).

Simply implementing Aria as a private blockchain (i.e., AriaBC) readily yields better throughput than existing private blockchains, which shows the benefits of using an advanced DCC from a deterministic database to improve private blockchain. 
HarmonyBC attains a larger margin (e.g., 1.5$\times$ throughput over AriaBC in YCSB), demonstrating the effectiveness of HarmonyBC's blockchain-specific optimizations (e.g., low abort rate and inter-block parallelism).
AriaBC has a slightly larger latency than HarmonyBC because its optimal block size is larger (see section \ref{sec:exp_block})

\begin{figure}
     \centering
     \includegraphics[width=0.95\linewidth]{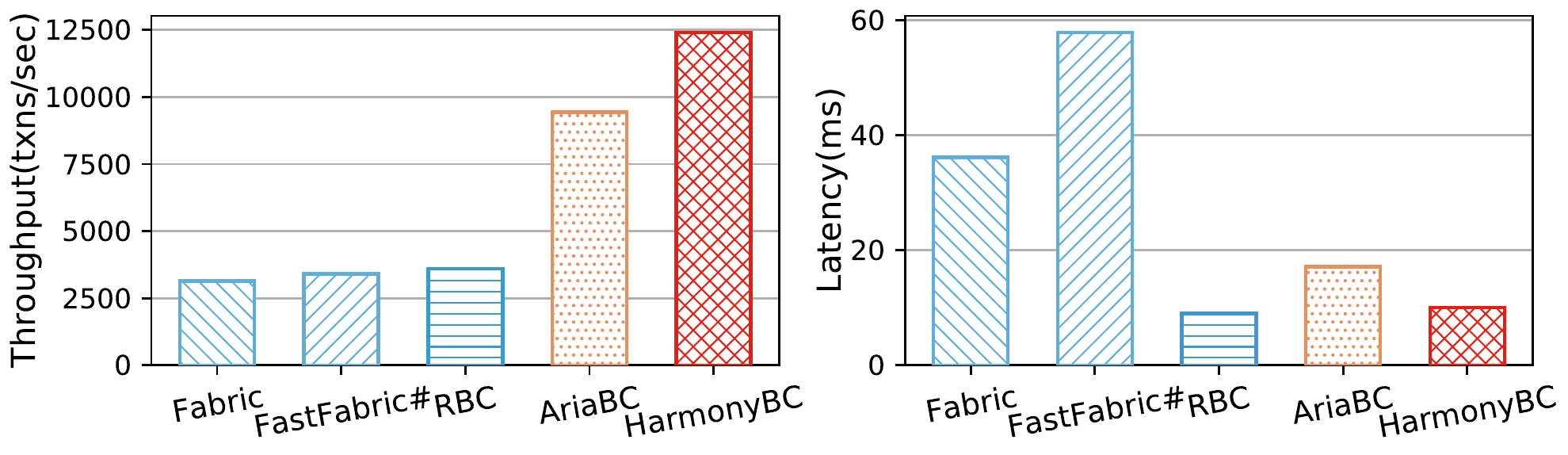}
     \vspace{-0.5em}
     \caption{Overall performance on Smallbank}
     \vspace{-1em}
     \label{fig:smallbank_overall}
\end{figure}

\begin{figure}
     \centering
     \includegraphics[width=0.95\linewidth]{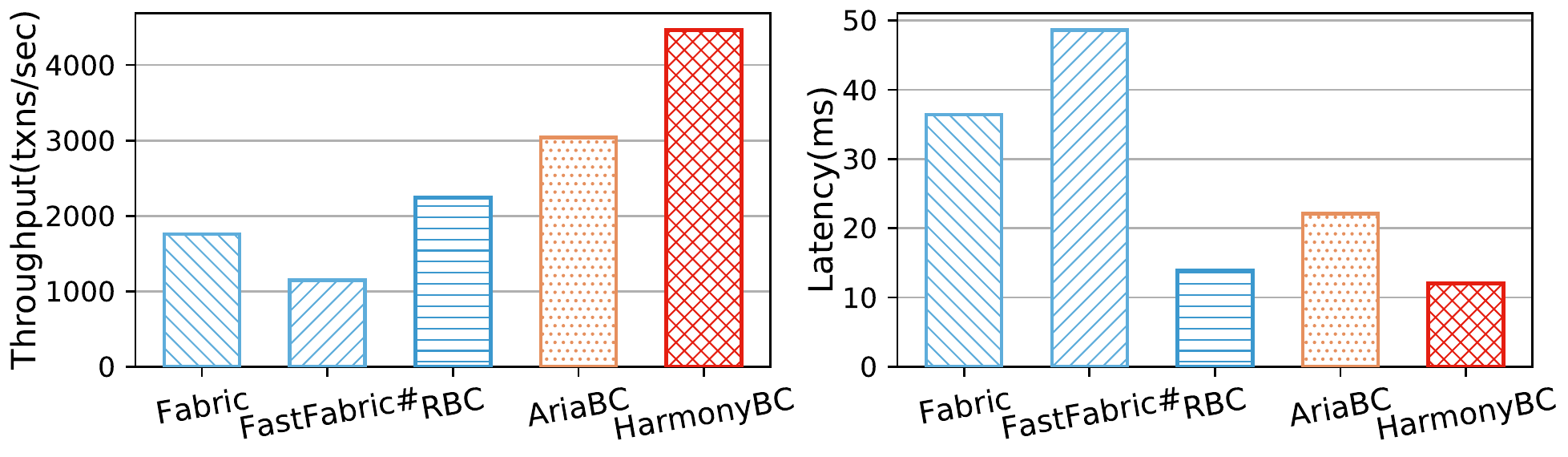}
     \vspace{-0.5em}
     \caption{Overall performance on YCSB}
     \vspace{0.5em}
     \label{fig:ycsb_overall}
\end{figure}

We observe that FastFabric\# outperforms Fabric in Smallbank but the opposite in YCSB. 
The reason is two-fold. First,
the latest version of Fabric used in our experiments has a higher throughput than that used in FastFabric\# paper. 
Second,
a transaction accesses 10 records in YCSB (at most two in Smallbank), resulting in a much more complicated dependency graph.
Our profiling shows the end-to-end throughput is bounded by the graph traversal indeed (a transaction spends around 75\% of runtime on it).
For the same reason, FastFabric\# has higher latency than Fabric.

\subsection{The impact of block size and concurrency}  \label{sec:exp_block}

Figures \ref{fig:smallbank_vary_block_size} and \ref{fig:ycsb_vary_block_size} show the results when varying block sizes (number of transactions per block) in our default cluster.
The block size is also the degree of concurrency 
for blockchains that support concurrent transactions
(i.e., HarmonyBC, AriaBC, and RBC)
because they can execute all transactions of a block in parallel
using one process/thread per transaction.

When the block size is very small (e.g., 5),
the throughput is low because of the limited degree of concurrency.
The throughput also drops when the block size is too large,
because of the increased number of conflicts and lock contentions.
The throughput drops are less noticeable in Smallbank (an almost flat curve in Figure \ref{fig:smallbank_vary_block_size} when block size $>50$) because transactions in Smallbank access fewer records and thus have fewer conflicts.
The latency increases with the block size due to the longer time to form and process a larger block.
This effect is most serious for FastFabric\# because larger block sizes induce larger dependency graphs, causing it to spend more time on graph traversal.

HarmonyBC's optimal block size is 25 for both YCSB and Smallbank, while RBC has a smaller optimal block size (i.e., 10) because it commits transactions serially such that a larger block size does not increase the level of concurrency much. 
In contrast, 
AriaBC's optimal block size is larger (50 for YCSB and 75 for Smallbank) because it favors more concurrency since Aria is originally designed for a highly concurrent main-memory database.
\begin{figure}[t]
     \centering
     \includegraphics[width=0.9\linewidth]{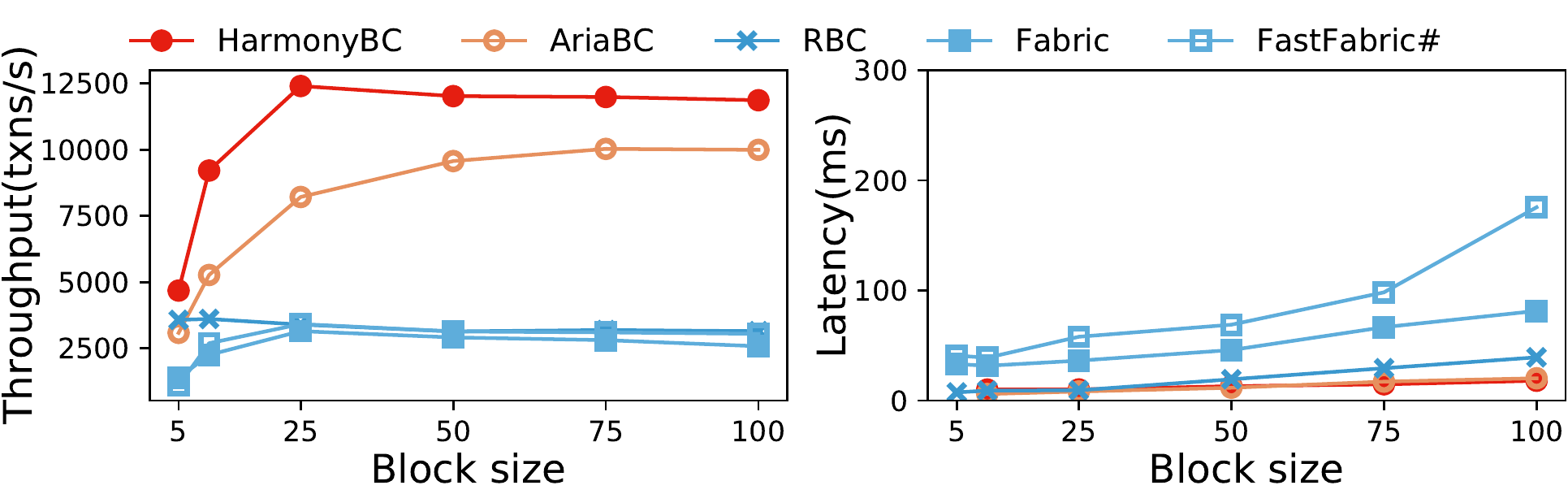}
     \vspace{-0.5em}
     \caption{Impact of block size on Smallbank} \vspace{-1em}
     \label{fig:smallbank_vary_block_size}
\end{figure}
\begin{figure}[t]
     \centering
     \includegraphics[width=0.9\linewidth]{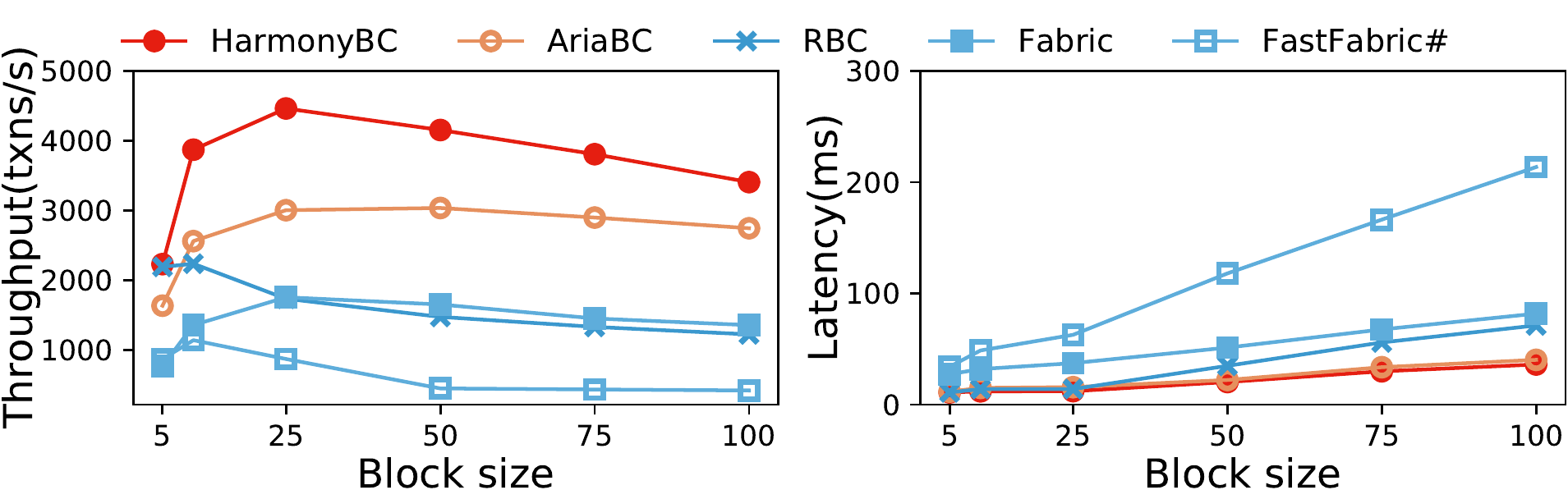}
     \vspace{-0.5em}
     \caption{Impact of block size on YCSB} \vspace{-1em}
     \label{fig:ycsb_vary_block_size}
\end{figure}
\begin{figure}[!t]
     \centering
     \includegraphics[width=0.9\linewidth]{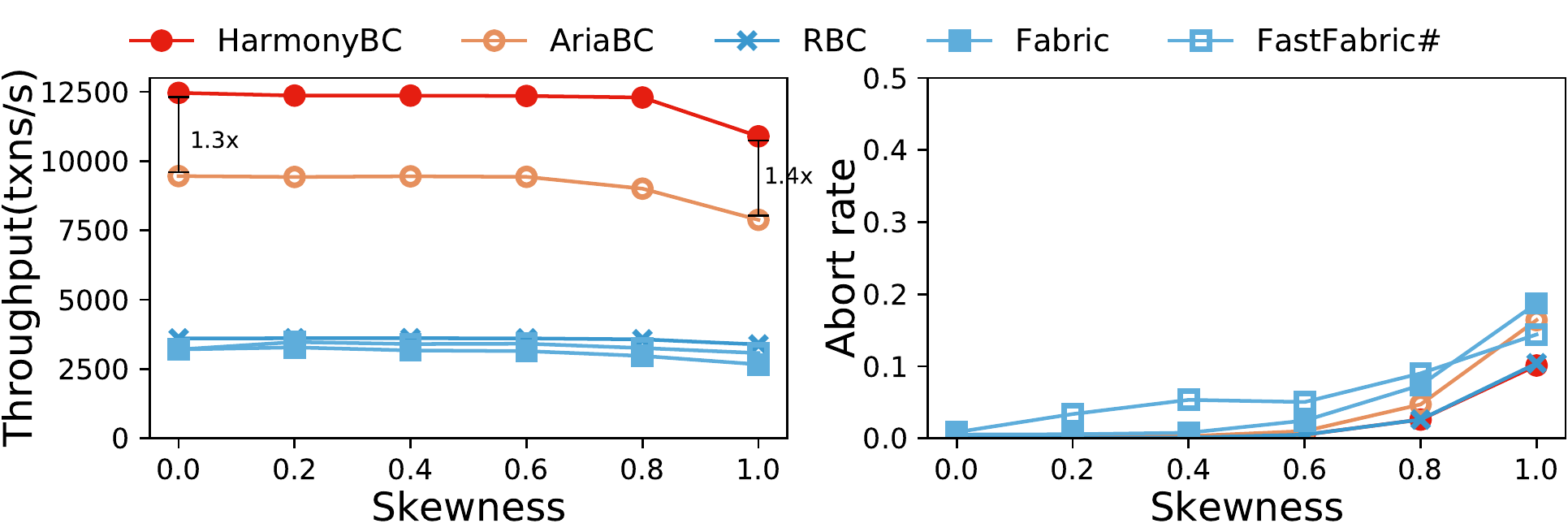}
     \vspace{-0.5em}
     \caption{Impact of contention on Smallbank}
     \vspace{-1em}
     \label{fig:smallbank_vary_skew}
\end{figure}
\begin{figure}[!t]
     \centering
     \includegraphics[width=0.9\linewidth]{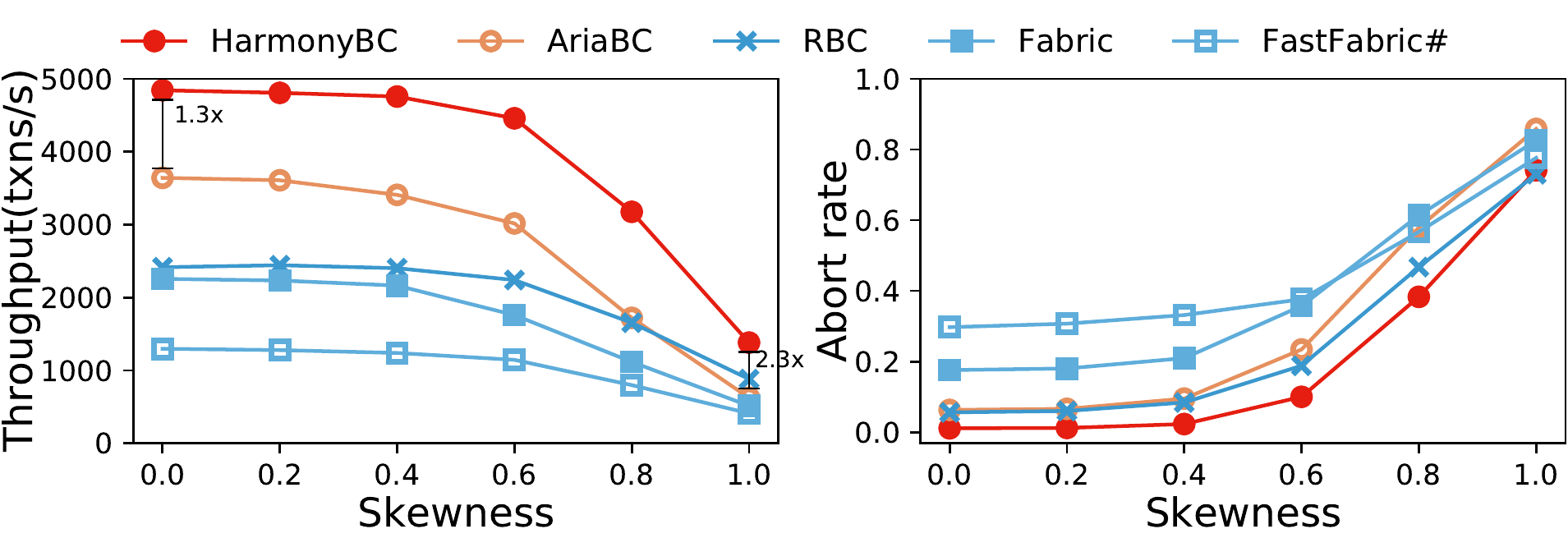}
     \vspace{-0.5em}
     \caption{Impact of contention on YCSB}
     \vspace{0.5em}
     \label{fig:ycsb_vary_skew}
\end{figure}

\subsection{The impact of contention and hotspot} \label{sec:exp_contention}
We first study the impact of contention by 
varying the skewness of the accessed data in our default cluster.
As shown in Figures \ref{fig:smallbank_vary_skew} and \ref{fig:ycsb_vary_skew}, 
all systems incur more aborts with a larger skewness and thus their performances drop (less severe in Smallbank since it generally has lower contention compared to YCSB).
Even when there is no skewness (i.e., skewness=0), Fabric and FastFabric\# abort many transactions in YCSB because the non-deterministic read-write sets often fail the clients to find a legitimate one for committing the transaction. FastFabric\# aborts more because in its implementation, it drops some transactions to avoid an overly large dependency graph.
Since AriaBC and RBC
abort a transaction on seeing a ww-dependency, and such dependencies occur fairly often even under low contention, HarmonyBC outperforms AriaBC and RBC under all skewnesses and it consistently has lower abort rates.
Both AriaBC and HarmonyBC have low abort rates in Smallbank.
But HarmonyBC outperforms AriaBC in throughput because it 
has fewer I/Os via update coalesce and 
better resource utilization via inter-block parallelism.

\added{Figure \ref{fig:false_abort} presents the false abort rates.
In general, the false abort rate grows with higher contention
except on YCSB with extreme contention ($\texttt{skewness=1.0}$) 
where most aborts in there are the real ones caused by serializability conflicts.
Despite that, we observe that 
Harmony has the lowest false rates in all cases.

}

Next, we particularly study the hotspot resiliency of HarmonyBC by using a variant of YCSB (we exclude Smallbank because it is not contentious enough).
There are still 10 statements in 1 transaction.
But we set 1\% of the records as hotspots, and 
a pair of \texttt{SELECT} and \texttt{UPDATE} operations 
would be rewritten as one \texttt{UPDATE} SQL statement that performs both read and write if they access the same record (since Postgres's optimizer does not have this rewrite rule, 
the hotspot resiliency feature of HarmonyDB has not been fully unleashed).
Each statement accesses a hotspot with a controlled probability. Fabric and FastFabric\# are not included in this experiment because they do not support SQL. As shown in Figure \ref{fig:hotspot}, HarmonyBC is almost unaffected by the hotspots, while the throughput of AriaBC and RBC drops significantly with increasingly higher abort rates when increasing the hotspot probability.

\begin{figure}[!t]
    \centering
    \begin{minipage}{\linewidth}
    \begin{subfigure}[b]{\linewidth}
        \centering
        \includegraphics[width=0.62\linewidth]{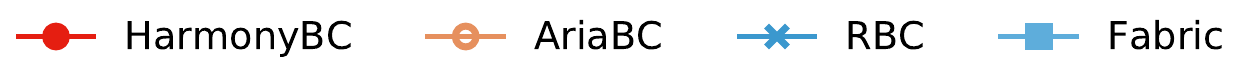}
    \end{subfigure}
    \end{minipage}
    \begin{minipage}{0.9\linewidth}
     \begin{subfigure}[b]{0.49\linewidth}
        \centering
        \includegraphics[width=\linewidth]{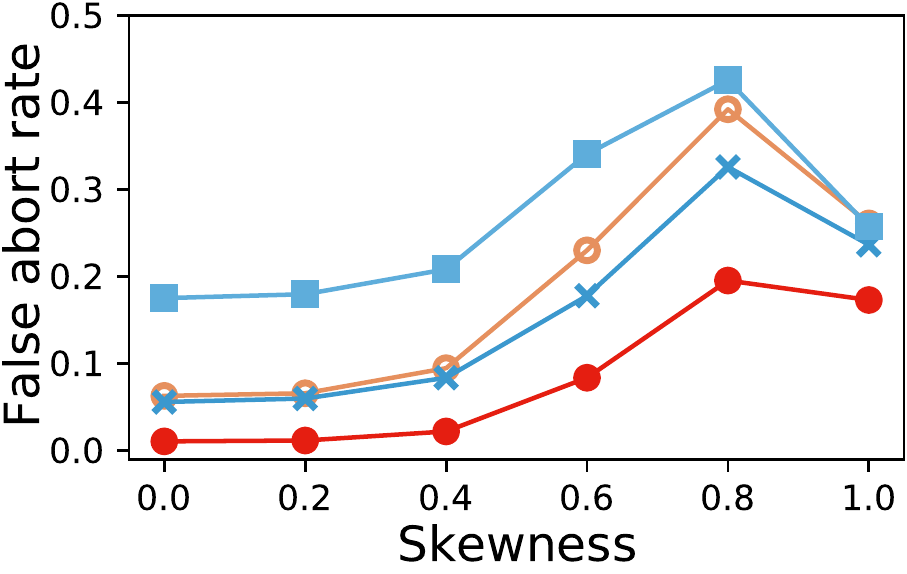}
        \caption{YCSB}\vspace{-0.15cm}
        \label{fig:ycsb_false_abort}
    \end{subfigure}
    \hfill
    \begin{subfigure}[b]{0.49\linewidth}
        \centering
        \includegraphics[width=\linewidth]{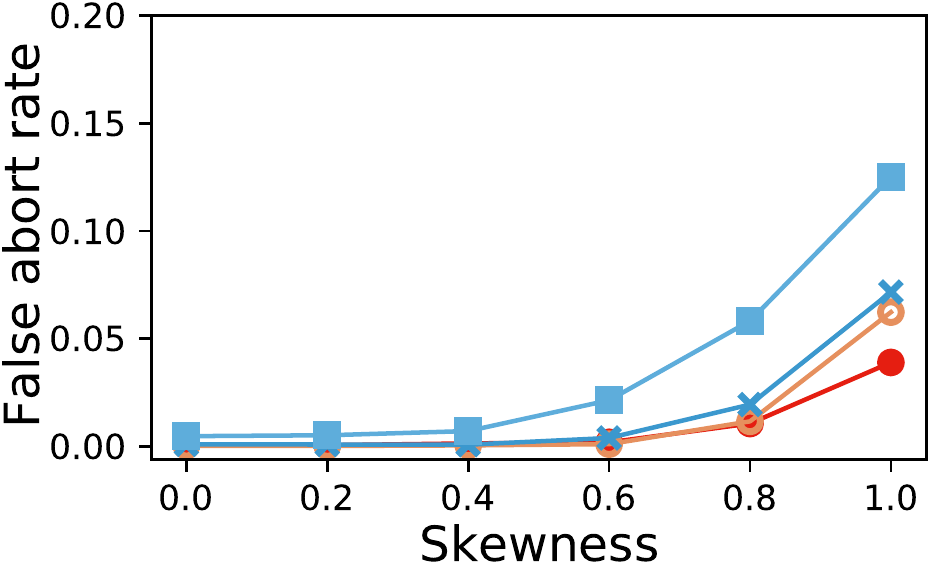}
        \caption{Smallbank}\vspace{-0.15cm}
        \label{fig:smallbank_false_abort}
    \end{subfigure} 
    \end{minipage}
    \caption{\added{False abort rate.  FastFabric\# is excluded because it uses an expensive unparallelizable graph traversal to eliminate all false aborts.}}
    \vspace{-0.3cm}
    \label{fig:false_abort}
\end{figure}

\begin{figure}[!t]
     \centering
     \includegraphics[width=0.9\linewidth]{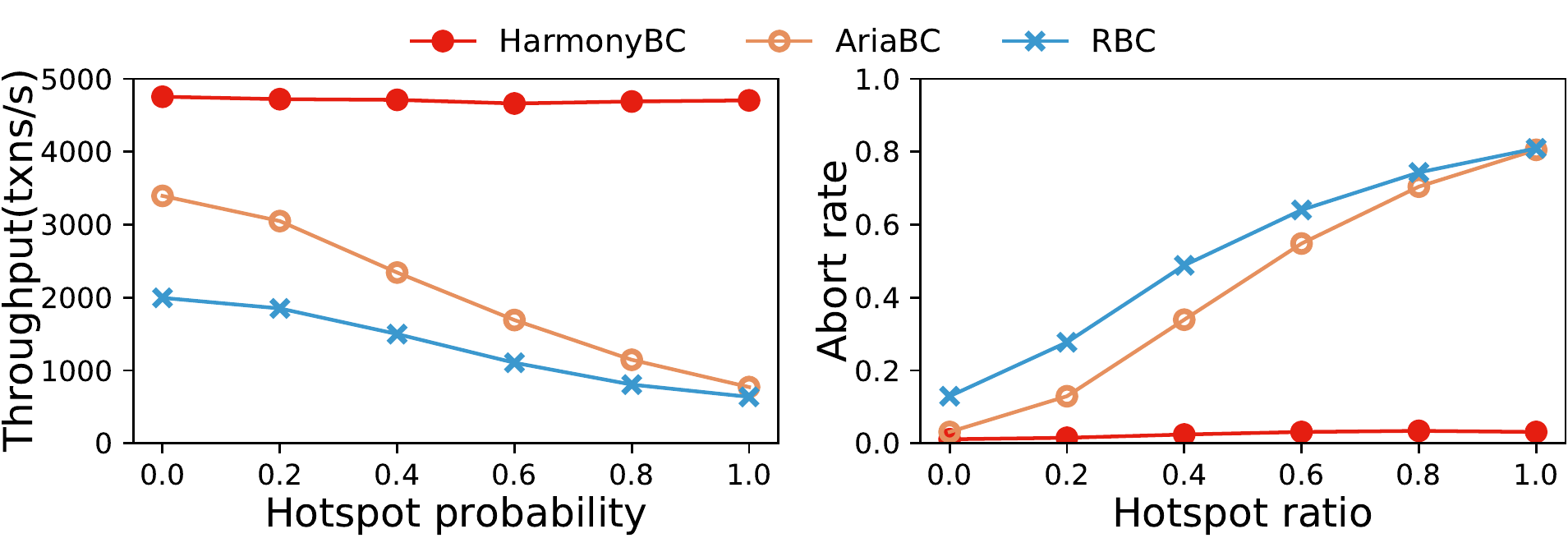}
     \vspace{-0.1cm}
     \caption{Impact of hotspots}\vspace{0.5em}
     \label{fig:hotspot}
\end{figure}

\subsection{The impact of number of replicas}\label{sec:exp_replica}

In this experiment, we scale the number of replicas up to 80
using the cloud cluster with 3 of them as the ordering service. All instances in this experiment
are located in the same region.
Figures \ref{fig:smallbank_vary_peers} and \ref{fig:ycsb_vary_peers} show the results.
The throughput and latency of HarmonyBC, AriaBC and RBC are almost unaffected by the number of replicas because their replicas work independently and only receive small transaction commands from the ordering service.
\deleted{
Although using the OE architecture, RBC's throughput decreases with more replicas.
That is because in RBC, the clients need to request the latest block number from one of the replicas when issuing a transaction. The latest block number has a lower chance to keep in-sync among the replicas due to network asynchrony when there are more replicas, causing more stale reads if the block number is actually lagged.
}
All SOV blockchains, in contrast, exhibit throughput drop and latency increase with more replicas due to a larger network overhead of transferring read-write sets.

\subsection{Geo-distributed cluster with BFT consensus} \label{sec:bft} 
In this experiment, we study the performance of HarmonyBC using Byzantine-fault-tolerant (BFT) HotStuff \cite{hotstuff} as the consensus layer 
and show that BFT would not hurt its throughput even scale up to 80 geo-distributed consensus nodes.

\begin{figure}[t]
     \centering
     \includegraphics[width=0.9\linewidth]{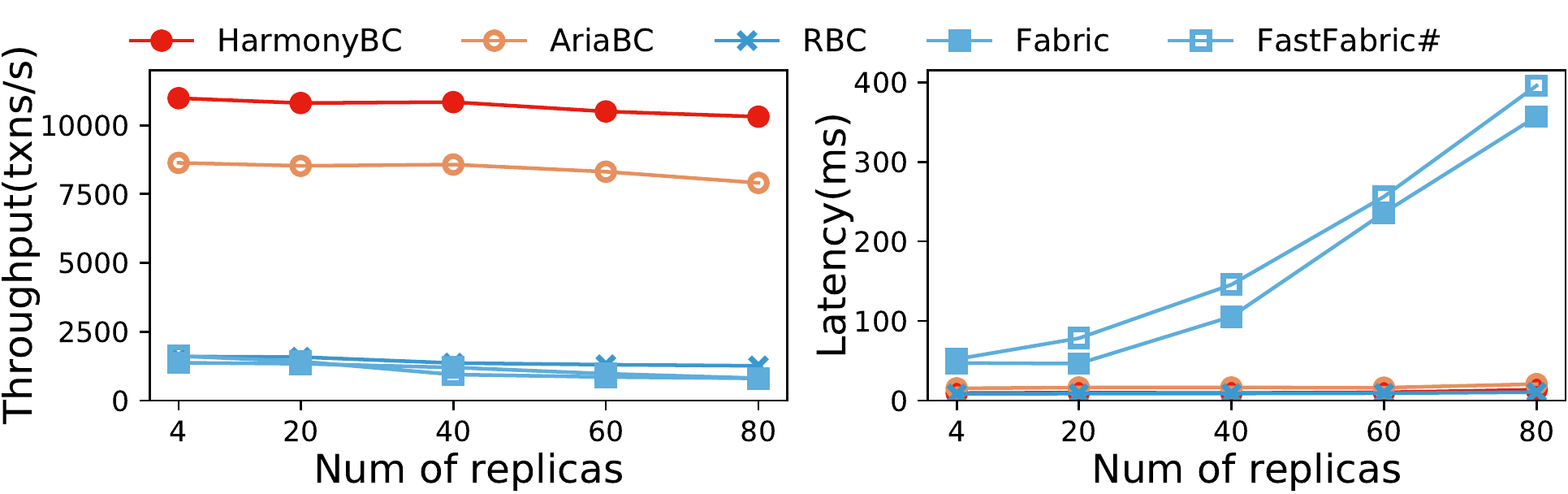}
     \vspace{-0.5em}
     \caption{Impact of number of replicas on Smallbank}
     \vspace{-1em}
     \label{fig:smallbank_vary_peers}
\end{figure}

\begin{figure}[t]
     \centering
     \includegraphics[width=0.9\linewidth]{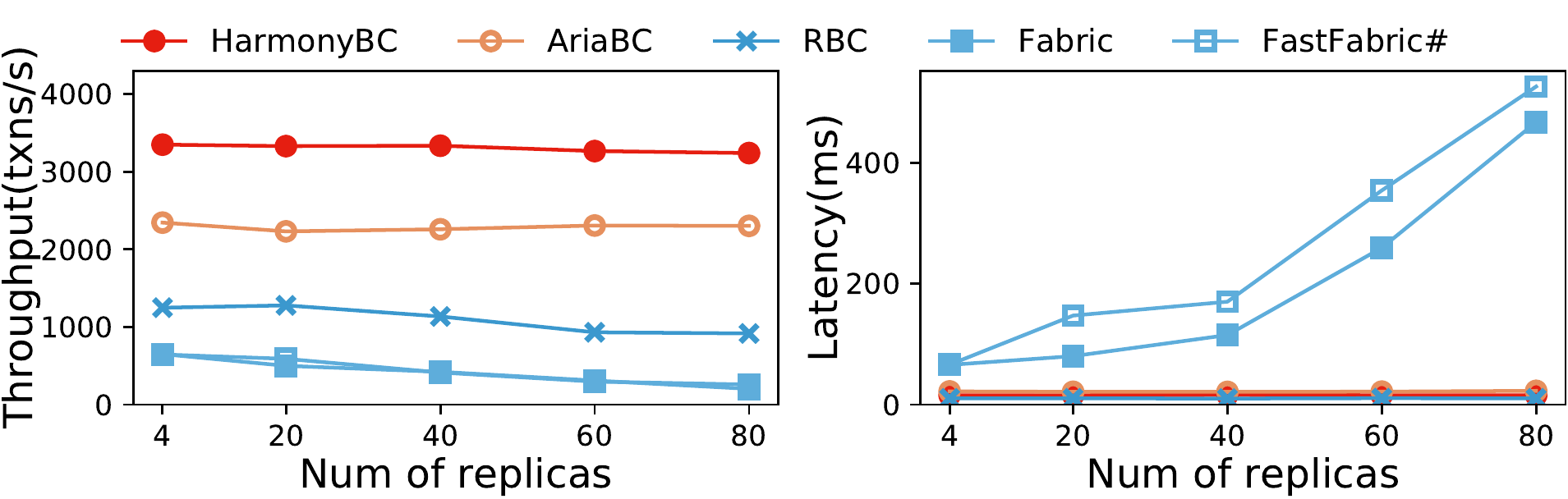}
     \vspace{-0.5em}
     \caption{Impact of number of replicas on YCSB} 
     \vspace{-1em}
     \label{fig:ycsb_vary_peers}
\end{figure}

\begin{figure}[t]
     \centering
     \includegraphics[width=0.92\linewidth]{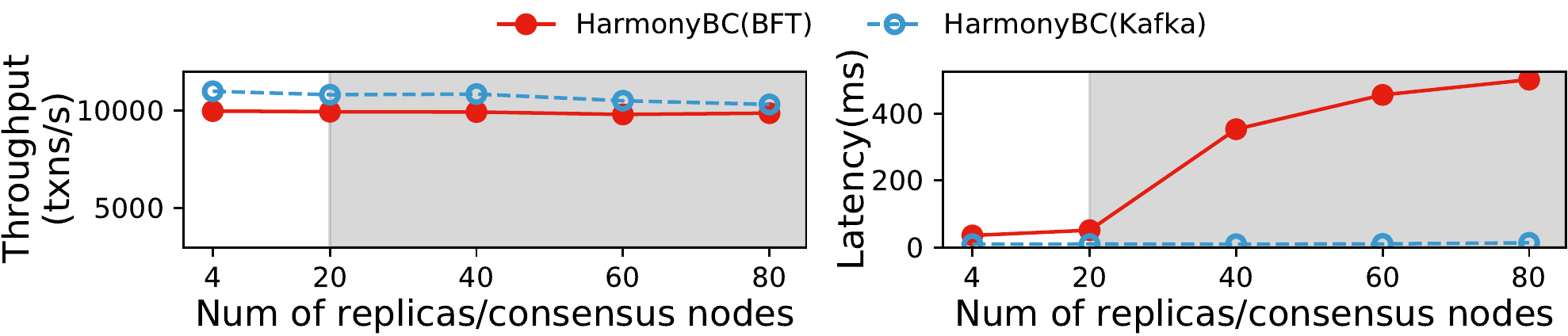}
     \vspace{-0.5em}
     \caption{Impact of BFT consensus on smallbank}
     \vspace{-1em}
     \label{fig:bft-smallbank}
\end{figure}

\begin{figure}[t]
     \centering
     \includegraphics[width=0.92\linewidth]{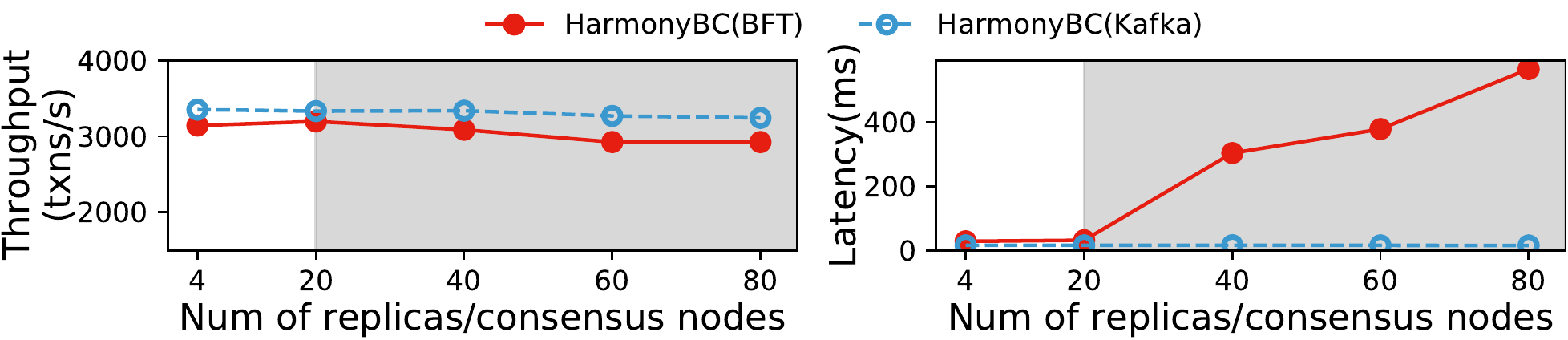}
     \vspace{-0.5em}
     \caption{Impact of BFT consensus on YCSB}
     \vspace{0.5em}
     \label{fig:bft-ycsb}
\end{figure}

To do so, we use the cloud cluster and allocate 80 instances from 4 continents
(Ohio, Mumbai, Sydney, and Stockholm; 20 instances each). For each instance, we run a consensus node and a replica node.
Figures \ref{fig:bft-smallbank} and Figure \ref{fig:bft-ycsb} show the results.
For comparison, the performance of the non-BFT HarmonyBC obtained in Section \ref{sec:exp_replica} is also shown in the figures.
In the experiment, the first 20 instances are located in the same region (the white part), and experiments with more than 20 instances (the gray part) show the results of the geo-distributed setting.
The results show that the throughput of HarmonyBC using a BFT consensus is almost unaffected (the minor throughput drop is due to HotStuff's other CPU overhead such as crypto). 
Since HotStuff requires more network round-trips to establish consensus than Kafka, the latency becomes larger and increases with more nodes. However, the increased latency does not affect the throughput because the consensus layer is not the bottleneck in this setting.

\subsection{TPC-C results}\label{sec:tpcc}
\added{
In this experiment, we study the performance of Harmony using TPC-C in our default cluster. Fabric and FastFabric\# are excluded here because they do not support the relational model natively \cite{fabric-tpcc}.
Figure \ref{fig:tpcc} shows the results in TPC-C  
are consistent with the ones in YCSB and Smallbank, 
where HarmonyBC outperforms the others under all settings.
HarmonyBC outperforms the others the most (3.3$\times$) 
when the workload contention is the highest (1 warehouse), 
which reflects HarmonyDB's resiliency to hotspot and the effectiveness of its abort-minimizing design.
Although with less contention, increasing the number of warehouses would increase the database size.
That explains the throughput and latency trend 
for more than 20 warehouses.
}

\subsection{Ablation study} \label{sec:effectiveness}
\begin{figure}
     \centering
     \includegraphics[width=0.9\linewidth]{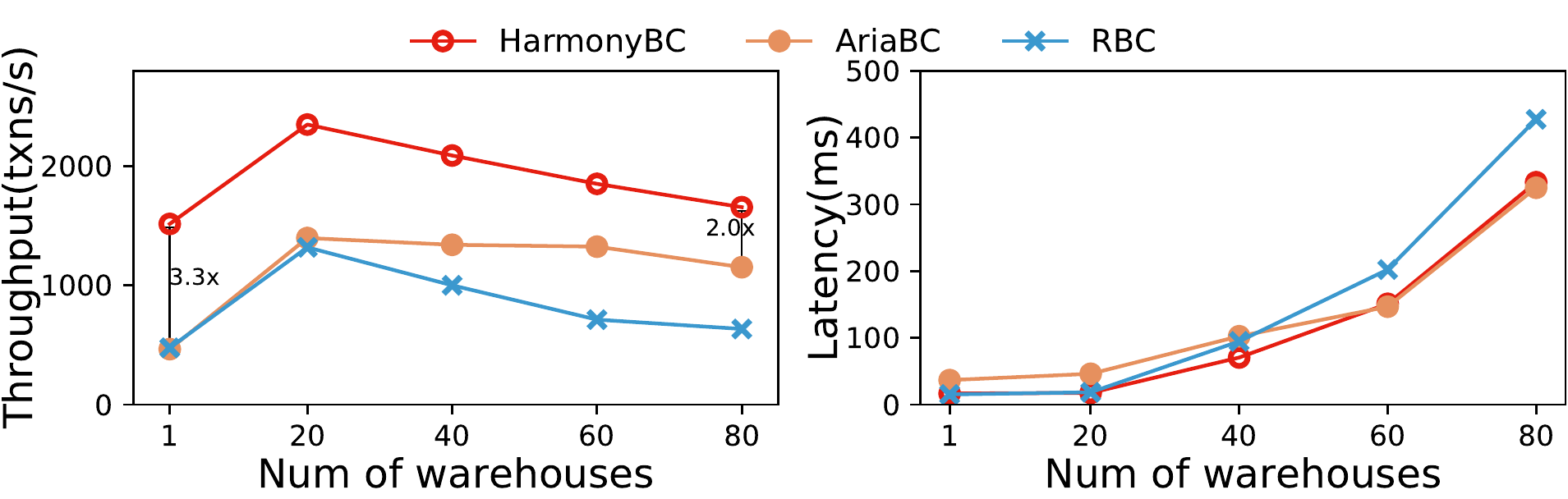}
     \vspace{-0.5em}
     \caption{\added{TPC-C results}}
     \vspace{-1em}
     \label{fig:tpcc}
\end{figure}
\begin{figure}
     \centering
     \includegraphics[width=0.95\linewidth]{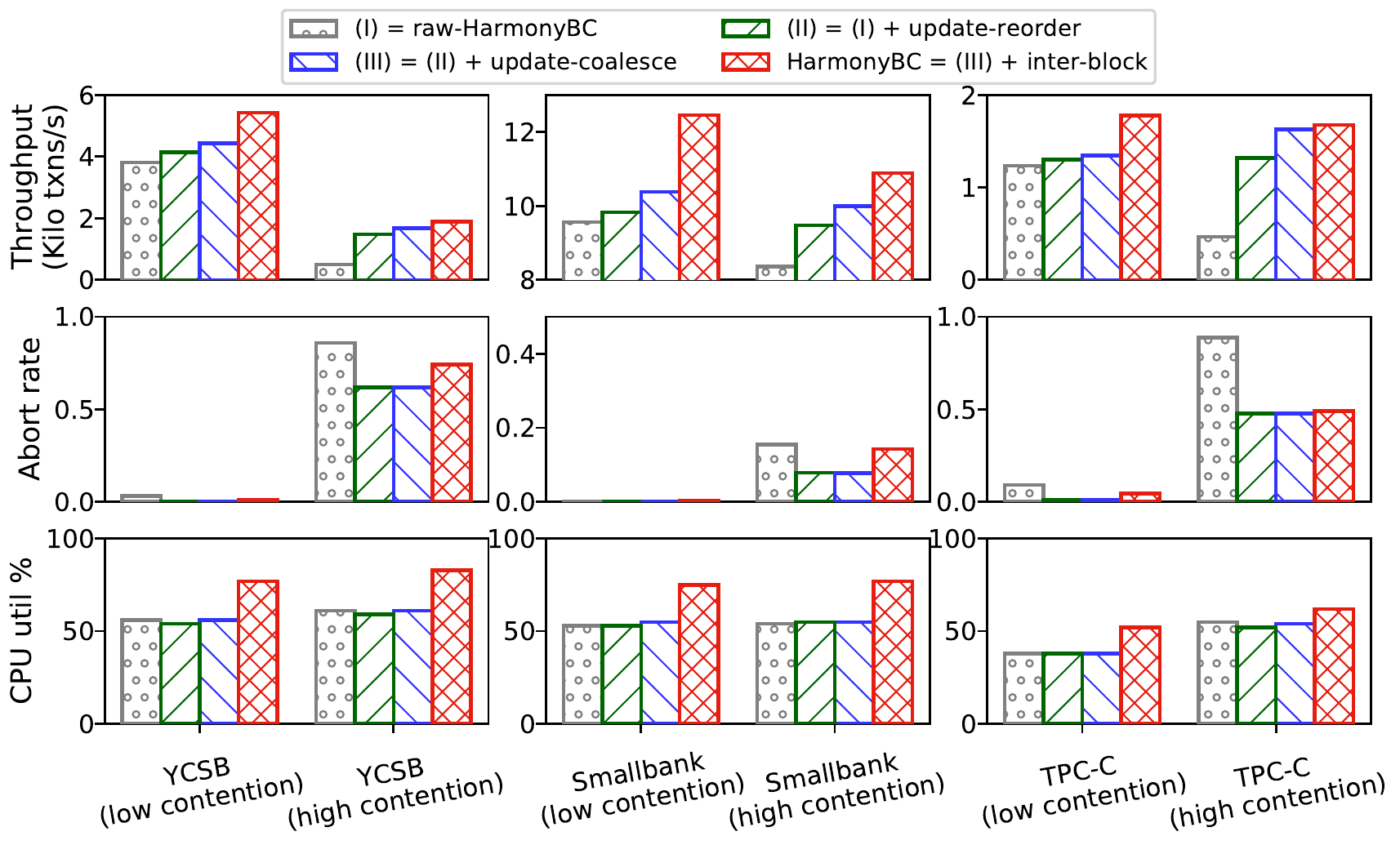}
     \vspace{-0.5em}
     \caption{\added{Ablation Study under Low contention (0 skewness in YCSB/Smallbank; 80 warehouses in TPC-C) and High contention (1.0 skewness in YCSB/Smallbank; 1 warehouses in TPC-C)}}
     \vspace{1em}
     \label{fig:effectiveness}
\end{figure}
\deleted{
In this experiment, we turn on and off some HarmonyBC's designs to demonstrate their effectiveness in our default cluster.
Since update reordering and update coalescence work in unison as shown in Algorithm \ref{alg:apply-writes}, we do not further break them down. 
When \texttt{update-optim} (including update reordering and coalescence) is disabled,  Harmony resolves conflicting updates by aborting a transaction on seeing a ww-dependency like Aria.
In Figure \ref{fig:effectiveness}, we first disable both \texttt{update-optim} and \texttt{inter-block} (i.e. inter-block parallelism) and add them back progressively.

In Smallbank, inter-block parallelism contributes to the major part of the improvement because of better CPU utilization as we measured an increase from 52\% to 68\% when skewness = 0 
(vanilla Postgres obtains 82\% utilization on the same workload; HarmonyBC cannot reach that utilization even with inter-block parallelism due to the barrier between the simulation step and the commit step required for determinism).
That confirms that it is worth enabling inter-block parallelism
because the better resource utilization outweighs 
the minor increase of aborts introduced by 
inter-block dependencies, even under high contention (skewness = 1.0).
\texttt{update-optim} does not help much in Smallbank because
the number of conflicting updates is too small for \texttt{update-optim} to shine.
\texttt{update-optim} becomes the major performance booster in YCSB under high contention (e.g., $3.2\times$ in YCSB when skewness is 1.0) since YCSB has more update operations per transaction and a high skewness causes more conflicting updates.
}
\added{
In this experiment, we conduct an ablation study to demonstrate the effectiveness of each individual optimization technique.
We cannot disable abort-minimizing validation because it is essential for correctness.
When we disable update reordering, 
we use back Aria's style to abort a transaction on seeing a ww-dependency to preserve correctness.

Figure \ref{fig:effectiveness} shows the results,
where {\tt raw-HarmonyBC} means all optimizations of HarmonyBC are disabled (except abort-minimizing validation).
From the top row, we see that each individual optimization can offer throughput improvement on all workloads,
although in different degrees.
Inter-block parallelism (red bar) and update-reordering (green bar) 
are shown to be the most effective optimizations under low contention and high contention, respectively.
Inter-block parallelism improves low contention throughput by increasing CPU utilization (3rd row) 
whereas update-reordering improves high contention throughput by reducing the abort rate (2nd row).
Notice that the use of inter-block parallelism 
increases the abort rate a bit,
but it increases the overall throughput as a whole. 
It shows that better CPU utilization weighs higher than the other factors in this case.
}

\subsection{Is Harmony still useful if all disk-related overheads are gone?}\label{sec:disk_overhead}
\added{
HarmonyDB is a disk-oriented blockchain.
A natural question is whether Harmony's optimizations are still effective if all disk-related overheads in the database layer are gone.  
As pointed out by Stonebraker et al. \cite{oltp}, 
the cost spent on a disk-oriented database includes 
(i) the disk I/O latency and 
(ii) the cost of masking the I/O latency (e.g., buffer manager, 2PL).
Hence, to answer that question,
we carried out an experiment 
that replaces SSD with a RAMDisk to remove (i).
PGSQL, the disk database that we build HarmonyBC on top of, has all sorts of costs on (ii) to minimize (i).
Hence, we also implement Harmony's techniques as a standalone memory database engine to study the effect of Harmony without both (i) and (ii).

Figure \ref{fig:ramdisk} shows the results on all three workloads.
For each workload, 
the two bars on the left are the results from the setting that we have been targeting -- SSD-oriented blockchains.
The two bars in the middle are the results of running the same
but replacing the SSDs with RAMDisk.
We can see that the throughput of HarmonyBC (as well as AriaBC) increases when there is no disk I/O.
We can also see that all Harmony's optimizations are still effective there.
With update reordering, fewer transactions are aborted; with inter-block parallelism, transactions from a later block 
can run with transactions from the previous block to increase CPU utilization. Both lead to higher throughput.
Of course, as pointed out in \cite{oltp},
many components become unnecessary (e.g. buffer manager) or needed to be redesigned if using memory as the main storage.
Those components exactly are the reasons why 
HarmonyBC's throughput cannot exceed the throughput of native memory blockchains (the two bars on the right).
Yet, even though native memory blockchains can have the best throughput, 
their throughput is really capped by the consensus layer
and it is still unclear whether the industry is willing to spend the DRAM money on that.}

\begin{figure}
     \centering
     \includegraphics[width=\linewidth]{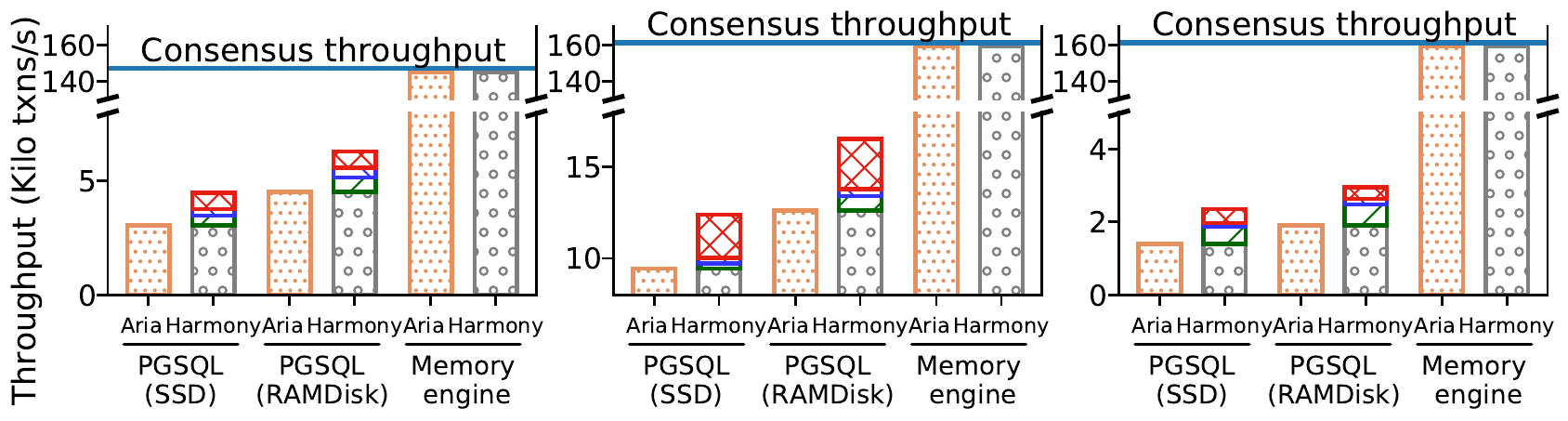}
     \vspace{-1em}
     \begin{minipage}{\linewidth}
     \begin{subfigure}[t]{0.33\linewidth}
     \caption{YCSB}
     \end{subfigure}
     \hfill
     \begin{subfigure}[t]{0.33\linewidth}
     \caption{Smallbank}
     \hfill
     \end{subfigure}
     \begin{subfigure}[t]{0.3\linewidth}
     \caption{TPC-C}
     \end{subfigure}
     \end{minipage}
     \vspace{-0.5em}
     \caption{\added{Blockchain throughput difference between non-disk and disk-based database layer.  
     The breakdown of Harmony's throughput is also given, following the format used in the ablation study.
     The consensus throughput varies across workloads due to the different transaction sizes.}\vspace{0.5em}}
     \label{fig:ramdisk}
\end{figure}
\section{Related Work} \label{sec:related}
\stitle{Blockchain consensus.} 
Efforts have been devoted to improving the consensus layer of blockchains [\added{\citenum{cop}}, \citenum{red-belly, mir-bft, proof-of-execution, fire-ledger, zyzzyva, bocwbo, HQ, ringbft}].
Most of them
use an in-memory hash map to simulate the state machine \cite{red-belly, mir-bft, proof-of-execution, ringbft, zyzzyva}.
Main-memory blockchains \cite{RCC, resilientdb} use DRAM as the storage layer and rely on replicas for fault tolerance --- a setting orthogonal to Harmony.
Without the disk-based database engine as the bottleneck,
they can attain a throughput as high as 100K tps.
Yet, disk-based blockchains still own the lion's
share of the market given their economies of scale
HarmonyBC aims to narrow the performance gap between the database layer and the consensus layer.

\stitle{Private Blockchain.}
Serial transaction execution cannot utilize multi-core CPUs for higher blockchain throughput (e.g., Quorum \cite{quorum}, Diem \cite{diem}, Concord \cite{concord}).
With a high-performance BFT consensus layer,
efficient deterministic concurrency becomes crucial to achieve better end-to-end throughput on disk-oriented private blockchains.
Besides those covered in Section \ref{sec:connect}, 
\added{Eve \cite{all-about-eve} is based on non-deterministic concurrency control. To upkeep replica consistency, it carries out consensus after concurrent execution. 
Since divergences happen frequently under non-deterministic execution,
Eve also uses static analysis to avoid scheduling conflicting transactions to the same block, which inherits the application limitations from static analysis.}
SChain \cite{schain} is a private blockchain with inter-block parallelism, but it requires static analysis and 
hence with limited applications.
Some private blockchains use sharding 
 to scale out \cite{sharding,sharding-2,sharding-3,byshard,caper,blockchaindb,basil}.
Sharding is part of our future work.
Trusted hardware (e.g., Intel SGX) can also be utilized to improve blockchain performance \cite{sharding-3,fabric-tee,tee-bft,chain-of-trust,tee-2} and privacy \cite{ccf,corda,tee-1}. 
The use of trusted hardware is also orthogonal to Harmony.

\stitle{Deterministic database.} 
Other works in deterministic databases focus on contention management \cite{caracal} and more efficient deterministic scheduling \cite{quecc,single-thread-dcc,tpart}. 
Some deterministic databases are able to eliminate two-phase commit (2PC) for cross-shard transactions \cite{calvin,bohm,pwv}. However, they all require static analysis. Aria \cite{aria} does not require static analysis but then a light form of 2PC is required.

\stitle{Distributed database.}
The connection between private blockchains and distributed databases has been studied both in terms of design \cite{bc_vs_db_1, bc_vs_db_2, bc_vs_db_3, bc_vs_db_4, bc_vs_db, untangling} and performance \cite{blockbench}.
Fusing the optimization techniques from distributed database into private blockchains is an ongoing trend \cite{fabric++, fabricsharp, fastfabric, rbc, chainifydb}. However, the close relationship between private blockchain and deterministic database has never been studied explicitly. We are the first to fill this gap and develop HarmonyBC to adopt deterministic concurrency control in private blockchains.

\stitle{Database concurrency.}
Deterministic concurrency control is built on the shoulder of many modern concurrency control protocols \cite{batch-occ,atomic-piece,ermia,cicada,silo,mocc,tictoc,opportunities,bamboo,bcc}.
Harmony shares thoughts with serializable snapshot isolation (SSI) \cite{ssi,ssi-2,ssi-3} in terms of simulating transactions against snapshots.
However,
SSI has a first-committer-wins rule that aborts a transaction if two transactions update the same record.
In contrast, Harmony reorders updates on the same record without any abort.
Harmony is also related to batch-based OCCs \cite{silo, batch-occ} and group commit \cite{group-commit-1,group-commit-2}. 
They improve efficiency by batching some operations (e.g., logging) while transactions stream in. In contrast,
Harmony's input is already batched due to the blockchain setting. 
STOv2 \cite{opportunities} proposes commit-time update to reduce conflicts in OCCs. In \cite{lazy}, updates are even deferred until the updated records are read by the others. Harmony also applies updates during commit, but it is superior by further reducing conflicts using update reordering to re-organize the dependencies.

\section{Conclusions} \label{sec:conclusion}
Many blockchains are still disk-oriented because of cost and use cases.
In this paper, we connect private blockchain with deterministic database and propose Harmony, a deterministic concurrency control protocol that is inspired by the state-of-the-art deterministic databases but specifically optimized for 
blockchain's disk-oriented database layer. 
HarmonyBC, our private blockchain prototype achieves 2.0$\times$ to 3.5$\times$ better throughput compared with the state-of-the-art private blockchains. 

\begin{acks}
This work is partially supported by Hong Kong General Research Fund (14200817), Hong Kong AoE/P-404/18, Innovation and Technology Fund (ITS/310/18, ITP/047/19LP) and Centre for Perceptual and Interactive Intelligence (CPII) Limited under the Innovation and Technology Fund.
\end{acks}


\end{document}